\theoremstyle{plain}
\newtheorem{thm}{Theorem}
\newtheorem{proposition}[thm]{Proposition}
\newtheorem{corollary}[thm]{Corollary}
\newtheorem{lemma}[thm]{Lemma}
\theoremstyle{definition}
\newtheorem{definition}{Definition}
\theoremstyle{remark}
\newtheorem{remark}{Remark}
\renewcommand{\leq}{\leqslant}
\renewcommand{\le}{\leqslant}
\renewcommand{\ge}{\geqslant}
\newcommand{\eqdef}{\stackrel{\text{def}}{=}}
\newcommand{\F}{\mathbb{F}}
\newcommand{\K}{\mathbb{K}}
\newcommand{\fq}{\F_{q}}
\newcommand{\fqm}{\F_{q^m}}
\newcommand{\word}[1]{\boldsymbol{#1}}
\newcommand{\bv}{\word{b}}
\newcommand{\cv}{\word{c}}
\newcommand{\ev}{\word{e}}
\newcommand{\gv}{\word{g}}
\newcommand{\hv}{\word{h}}
\newcommand{\mv}{\word{m}}
\newcommand{\vv}{\word{v}}
\newcommand{\xv}{\word{x}}
\newcommand{\zv}{\word{z}}
\newcommand{\mat}[1]{\boldsymbol{#1}}
\newcommand{\Am}{\mat{A}}
\newcommand{\Bm}{\mat{B}}
\newcommand{\Cm}{\mat{C}}
\newcommand{\Dm}{\mat{D}}
\newcommand{\Gm}{\mat{G}}
\renewcommand{\Im}{\mat{I}}
\newcommand{\Lm}{\mat{L}}
\newcommand{\Mm}{\mat{M}}
\newcommand{\Pm}{\mat{P}}
\newcommand{\Qm}{\mat{Q}}
\newcommand{\Rm}{\mat{R}}
\newcommand{\Sm}{\mat{S}}
\newcommand{\Tm}{\mat{T}}
\newcommand{\Xm}{\mat{X}}
\newcommand{\Ym}{\mat{Y}}
\newcommand{\Zm}{\mat{Z}}
\newcommand{\ZZ}{\mat{0}}
\newcommand{\Gp}{\mat{G}_{\rm pub}}
\newcommand{\Cp}{\code{C}_{\rm pub}}
\newcommand{\tp}{t_{\rm pub}}
\newcommand{\gab}[2]{\CG_{#1}\left(#2\right)}
\newcommand{\code}[1]{\mathscr{#1}}
\newcommand{\dual}[1]{{#1}^\bot}
\newcommand{\CA}{\code{A}}
\newcommand{\CC}{\code{C}}
\newcommand{\CG}{\code{G}}
\newcommand{\norm}[1]{\left | #1  \right |}
\newcommand{\rank}{{\normalfont \texttt{rank}}}
\newcommand{\GL}{{\normalfont \textsf{GL}}}
\newcommand{\distor}{\mathcal{D}}
\newcommand{\MS}[3]{\mathcal{M}_{#1,#2}\left(#3\right)}
\title{Improved Cryptanalysis of Rank Metric Schemes Based on Gabidulin Codes}
\author{Ayoub Otmani}	
	\address{University of Rouen,  LITIS}
	\email{ayoub.otmani@univ-rouen.fr} 
\author{Herv\'e Tal\'e Kalachi}
\address{University of Rouen, LITIS 
 		\&
		University of Yaounde 1, Department of Mathematics, ERAL, Cameroon}
	\email{hervekalachi@gmail.com}
\author{S\'elestin Ndjeya}
 	\address{
	University of Yaounde 1, Department of Mathematics, ERAL, Cameroon.}
	\email{ndjeyas@yahoo.fr}
\begin{document}

\begin{abstract}
	We prove that any variant of the GPT cryptosystem which uses a right column scrambler over the extension field
as advocated by the works of Gabidulin \textit{et al.}  with the goal to resist to Overbeck's structural attack are actually still 
vulnerable to that attack. We show that by applying  the Frobenius operator appropriately on the public key, it is possible to build a Gabidulin code having the same dimension as the original secret Gabidulin code 
but with a lower length. In particular, the code obtained by this way correct less errors than the secret one 
but its error correction capabilities are beyond the number of errors added by a sender, and consequently
an attacker is able to decrypt any ciphertext with this degraded Gabidulin code.
We also considered the case where an isometric transformation is applied  in conjunction  with a right column scrambler which has its entries in the extension field. We proved that this protection is useless both in terms of performance and security.
Consequently, our results show that all the existing techniques aiming to hide the inherent algebraic structure of Gabidulin codes have failed.

\end{abstract}

\maketitle


\section{Introduction}

The emergence of the post-quantum cryptography was mainly enabled thanks to  Shor 
who proved \cite{S94a,S97} that the discrete log problem and the factorization 
can be solved in polynomial time with an hypothetical quantum computer. Recent progress in solving the discrete log problem, 
and the fact that important industrial investments are made to build a quantum computer 
have aroused concerns about the foundations of the real-world cryptography, prompting people to seek serious 
post-quantum alternatives.

Among all the existing  solutions,
McEliece scheme \cite{M78} is one of the oldest post-quantum public key encryption scheme. The innovative McEliece's approach 
rests on the use of the theory of error-correcting codes to design a one-way function of the form $\mv \mapsto \mv \Gm + \ev$
where $\Gm$ generates a vector subspace of $\F_2^n$ and $\ev$ is a random binary error vector of Hamming weight $\tp$. McEliece  
used binary Goppa codes which are well-known for having a very fast decoding algorithm.
Designed in 1978, it has withstood several attack attempt but it suffers from an important drawback due to the enormous size of the
public keys. 
In order to solve this problem, several modifications of the
scheme have been proposed among which the use of rank metric
codes instead of the Hamming metric. The first rank-metric scheme was proposed in \cite{GPT91} by Gabidulin,
Paramonov and Tretjakov which is now called the GPT cryptosystem.
This scheme  can be seen as an analogue of the McEliece scheme public key cryptosystem based on the class of Gabidulin codes. 
An important operation in the key generation of the GPT cryptosystem is the ``hiding'' phase where the secret generator matrix 
$\Gm$ undergoes a  transformation to mask the inherent algebraic structure of the associated Gabidulin code. 
This transformation  is  a 
probabilistic algorithm  that adds some randomness to its input.
Originally, the authors  in \cite{GPT91} proposed to use  a \emph{distortion} 
transformation  that takes $\Gm$ and outputs the public matrix  
$\Gp = \Sm(\Gm + \Xm)$ where $\Xm$ is a random matrix with a prescribed rank 
$t_{\Xm}$ and $\Sm$ is an invertible matrix. The presence of a  distortion matrix has however an impact: 
the sender has to  add an error vector whose rank weight is $\tp = t - t_{\Xm}$ where $t$ is the error correction capability of the secret 
underlying Gabidulin code. Hence, roughly speaking,  the hiding phase publishes a degraded code in terms of 
error correction.

Gabidulin codes are often seen as  equivalent of Reed-Solomon codes in the Hamming metric and like them, 
they are highly structured. That is the reason why their use in the GPT cryptosystem has been the subject to
several attacks. Gibson was the first to prove
the weakness of the system through a series of successful attacks \cite{G95,G96}.
Following this failures, the first works which modified the GPT scheme to avoid Gibson's attack were published in \cite{GO01,GOHA03}.  
The idea is to hide further the structure of Gabidulin code by considering  isometries for the rank metric.
Consequently, a \emph{right column scrambler} $\Pm$ is  introduced which is an invertible matrix with its entries in the base field $\fq$
while the ambient space of the Gabidlun code is $\fqm^n$. 
But Overbeck designed in \cite{O05,O05a,O08}  a more
general attack that dismantled all the existing  modified GPT cryptosystems. 
His approach consists in applying an operator $\Lambda_i$ which applies $i$ times  the Frobenius operation 
on the public generator matrix $\Gp$. Overbeck observed that the dimension increases by $1$ each time the Frobenius is applied. He then
proved that by taking $i = n - k - 1$ the codimension becomes $1$ if $k$ is the rank of $\Gp$ (which also  the dimension of the associated Gabidulin code).
This phenomenon is a clearly distinguishing property of a Gabidulin code which cannot be encountered for instance with a random linear code where the dimension would increase by $k$ for each use of the Frobenius operator.

Overbeck's attack uses crucially  two important facts, namely the column scrambler matrix $\Pm$ is defined on the 
based field $\fq$ and the codimension of $\Lambda _{n-k-1}\left(\Gp \right)$ is equal to $1$.
Several works then proposed to resist to this attack  either by taking special 
distortion matrix so that the second property is not true as in \cite{L10,RGH10}, or 
by taking a column scrambler matrix defined over the extension field $\fqm$ as in \cite{G08,GRH09,RGH11}. 

\medskip

In this paper, we  study  the security of the second approach. We show that even if the column scrambler is defined 
on the extension field as it is done in  \cite{G08,GRH09,RGH11},  it is still possible to recover a secret Gabidulin code using precisely Overbeck's technique. Our analysis shows that by applying the operator $\Lambda_i$  with $i < n - k - 1$, we obtain a  Gabidulin code  whose error correction $t^*$ is indeed strictly less than the error correction of the secret original Gabidulin code
but, $t^*$ is strictly greater than the number of added errors  $\tp$. In other words, an attacker is still able to decrypt any ciphertext 
and consequently, all the scheme presented in  \cite{G08,GRH09,RGH11} are actually not resistant to Overbeck's attack 
unlike what it was claimed by the authors. When the attack is implemented with the recommended parameters of   
\cite{GRH09,RGH11}, our experimental results show that the attack is very fast (less than one second).
In particular, our results outperform those given
in \cite{GRS16,HMR16} which were for a while the best attacks against the schemes of \cite{G08,GRH09,RGH11}.
Note that the attacks of \cite{GRS16,HMR16}  are generic decoding algorithms  whereas  our approach is directed towards recovering the structure of a Gabidulin code.

\section{Preliminary Notion}

The finite field with $q$ elements is denoted by $\F_q$ where $q$ is a power of a prime number.
For any subfield $\K \subseteq \F$ of a field $\F$
and for any positive integers  $k$ and $n$ such that $k \le n$,
the $\K$-vector space spanned by  $\bv_1,\dots{},\bv_k$  where each $\bv_i \in \F^n$ is denoted by 
$\sum_{i=1}^k \K \; \bv_i$.
The set of matrices with $m$ rows and $n$ columns 
and entries in $\F$ is denoted by $\MS{m}{n}{\F}$. 
The group of invertible matrices of size $n$ over $\F$ is denoted by $\GL_n(\F)$.

\begin{definition}[Rank weight]
Let $\Am$ be a matrix from $\MS{m}{n}{\F}$ where $m$ and $n$ are positive integers. The \emph{rank weight} 
of $\Am$ denoted by $\norm{\Am}$ is the rank of $\Am$. The \emph{rank distance} between two matrices 
$\Am$ and $\Bm$ from $\MS{m}{n}{\F}$ is defined as $\norm{\Am - \Bm}$.
\end{definition}

It is a well-known fact that the rank distance on $\MS{m}{n}{\F}$ has the properties of a metric. 
But in the context of the rank-metric cryptography, this rank distance is rather defined for vectors $\xv \in \fqm^n$. 
The idea is to consider the field $\fqm$ as an
$\fq$-vector space and hence any vector  $\xv \in \fqm^n$ as a matrix from $\MS{m}{n}{\fq}$ by decomposing 
each entry $x_i \in \fqm$ into an $m$-tuple of $\fq^m$ with respect to an arbitrary basis of $\fqm$.  The rank weight 
of $\xv$ also denoted by $\norm{\xv}$ is then its rank\footnote{This rank is of course independent of the choice of the basis  of $\fqm$ since the rank of a matrix is invariant when multiplied by an invertible matrix.} 
viewed as a matrix of $\MS{m}{n}{\fq}$. Hence, it is possible to define a new metric on $\fqm^n$ that we recall explicitly 
in the following.

\begin{definition}
Let us consider the finite field extension $\fqm/\fq$ of degree $m \ge 1$.
The \emph{rank weight} of a vector $\xv = \left(x_{1},x_{2},...,x_{n}\right)$ in $\fqm^n$  denoted by $\norm \xv$
is  the dimension of the $\fq$-vector space generated by $\{x_1,\dots{},x_n\}$
\begin{equation}
\norm \xv = \dim \sum_{i=1}^n \fq x_i.
\end{equation}
Similarly,  the  column rank over $\fq$ for any matrix $\Mm$ from $\MS{k}{n}{\fqm}$
is also denoted by $\norm \Mm$. 
\end{definition}
\begin{remark}
Note again that  $\norm \Mm$ represents $\dim \sum_{i}^n \fq \Mm_{i}$ where $\Mm_1,\dots{},\Mm_n$ 
are the columns of $\Mm$, that is to say the maximum number of columns that are linearly independent over $\fq$ 
when each entry is written as an $m$-tuple of $\fq^m$ with respect to a basis of $\fqm$.  
Equivalently,  $\norm{\Mm}$  is also the maximal number of rows that are linearly independent over $\fq$.
\end{remark}

\begin{proposition} \label{prop:rank_reduction}
Let $\Mm$ be  a matrix from $\MS{k}{n}{\fqm}$ and set  $s = \norm{\Mm}$ with $s  < n$. 
There exist then $\Mm^*$ in $\MS{k}{s}{\fqm}$ with $\norm{\Mm^*} = s$ and
$\Tm$ in $\GL_n(\fq)$ such that:
\begin{equation}
\Mm \Tm= (\Mm^* \mid \ZZ)
\end{equation}
In particular for any $\xv \in \fqm^{n}$ such that  $\norm{\xv} = s$ 
there exists  $\Tm$ in $\GL_n(\fq)$  for which $\xv \Tm=(\xv^* \mid \ZZ)$ where 
$\xv^* \in \fqm^{s}$ and $\norm{\xv^*}=s$.
\end{proposition}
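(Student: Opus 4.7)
The plan is to use the fact that $\norm{\Mm}=s$ means the columns $\Mm_1,\dots,\Mm_n \in \fqm^k$ of $\Mm$ generate an $s$-dimensional $\fq$-vector subspace of $\fqm^k$, and then perform $\fq$-linear column operations (which correspond exactly to right-multiplication by an element of $\GL_n(\fq)$) to zero out the last $n-s$ columns.

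First I would extract an $\fq$-basis from the columns: pick indices $i_1 < \dots < i_s$ such that $\Mm_{i_1},\dots,\Mm_{i_s}$ form an $\fq$-basis of $\sum_{j=1}^n \fq \Mm_j$. Let $\Pm \in \GL_n(\fq)$ be the permutation matrix that sends these columns to positions $1,\dots,s$, so that $\Mm \Pm = (\Mm^* \mid \Mm')$ with $\Mm^* \in \MS{k}{s}{\fqm}$ consisting of these basis columns and $\Mm' \in \MS{k}{n-s}{\fqm}$ consisting of the remaining $n-s$ columns. By construction $\norm{\Mm^*} = s$.

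Next, since every column of $\Mm'$ lies in the $\fq$-span of the columns of $\Mm^*$, there exists a matrix $\Bm \in \MS{s}{n-s}{\fq}$ (whose entries are exactly the $\fq$-coordinates expressing each column of $\Mm'$ in the chosen basis) satisfying $\Mm' = \Mm^* \Bm$. Then I would set
\begin{equation}
\Tm \eqdef \Pm \begin{pmatrix} \Im_s & -\Bm \\ \ZZ & \Im_{n-s} \end{pmatrix},
\end{equation}
which lies in $\GL_n(\fq)$ since it is a product of two invertible $\fq$-matrices. A direct block computation yields
\begin{equation}
\Mm \Tm = (\Mm^* \mid \Mm')\begin{pmatrix} \Im_s & -\Bm \\ \ZZ & \Im_{n-s} \end{pmatrix} = (\Mm^* \mid -\Mm^*\Bm + \Mm') = (\Mm^* \mid \ZZ),
\end{equation}
as required. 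The vector case is the special instance $k = 1$: applying the construction to $\xv$ viewed as a $1 \times n$ matrix produces the desired $\xv^* \in \fqm^s$ with $\norm{\xv^*} = s$.

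There is no real obstacle here; the only point that needs care is checking that $\Bm$ can be chosen with entries in the base field $\fq$ rather than in $\fqm$, but this is immediate from the definition of $\norm{\Mm}$ as the dimension of the $\fq$-span of the columns, which guarantees precisely that the expansion coefficients lie in $\fq$.
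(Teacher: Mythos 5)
Your proof is correct. The paper actually states Proposition~\ref{prop:rank_reduction} without proof, treating it as a standard fact, and your argument --- extracting an $\fq$-basis from the columns, writing the remaining columns as $\Mm' = \Mm^*\Bm$ with $\Bm$ over $\fq$, and clearing them via right-multiplication by a block unipotent matrix in $\GL_n(\fq)$ --- is exactly the canonical justification one would supply, including the key observation that the expansion coefficients lie in the base field because $\norm{\Mm}$ is defined as the dimension of the $\fq$-span of the columns.
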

This permits to state the following corollary.

\begin{corollary} \label{cor:uprank}
For any $\Mm \in \MS{k}{n}{\fqm}$ and for any  $\mv \in \fqm^k$
\begin{equation}
\norm{\mv\Mm} \leq \norm{\Mm}
\end{equation}
\end{corollary}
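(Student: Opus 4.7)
The plan is to deduce the corollary essentially for free from Proposition \ref{prop:rank_reduction}, using the fact that multiplying by a matrix in $\GL_n(\fq)$ preserves the rank weight.

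First I would set $s = \norm{\Mm}$ and handle separately the trivial case $s = n$, where the inequality $\norm{\mv\Mm} \le n = s$ is immediate since $\mv\Mm \in \fqm^n$ and the rank weight of any vector of length $n$ is bounded by $n$. So assume $s < n$, which is the hypothesis needed to invoke Proposition \ref{prop:rank_reduction}: there exist $\Mm^* \in \MS{k}{s}{\fqm}$ and $\Tm \in \GL_n(\fq)$ with $\Mm\Tm = (\Mm^* \mid \ZZ)$.

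Next I would multiply both sides on the left by $\mv$, yielding $\mv\Mm\Tm = (\mv\Mm^* \mid \ZZ)$. The crucial observation is that the rank weight of a vector in $\fqm^n$ is invariant under multiplication by an element of $\GL_n(\fq)$: indeed, such a multiplication corresponds, after fixing an $\fq$-basis of $\fqm$ and viewing the vector as a matrix in $\MS{m}{n}{\fq}$, to an $\fq$-invertible column operation, which does not alter the column rank. Hence $\norm{\mv\Mm} = \norm{\mv\Mm\Tm} = \norm{(\mv\Mm^* \mid \ZZ)}$, and the trailing zero columns contribute nothing, so this equals $\norm{\mv\Mm^*}$.

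Finally, $\mv\Mm^*$ lies in $\fqm^s$, and the rank weight of any vector of length $s$ is bounded above by $s$ (its coordinates generate an $\fq$-space of dimension at most $s$). Combining the two steps gives $\norm{\mv\Mm} = \norm{\mv\Mm^*} \le s = \norm{\Mm}$, as required. There is no real obstacle here; the only thing to be careful about is the invariance of the rank weight under $\GL_n(\fq)$-multiplication, which is the content of the footnote following the first rank-weight definition and which I would briefly justify as above.
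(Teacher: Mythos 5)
Your proof is correct and follows exactly the route the paper intends: the corollary is stated immediately after Proposition~\ref{prop:rank_reduction} with no separate argument, precisely because it follows from the decomposition $\Mm\Tm = (\Mm^* \mid \ZZ)$ together with the invariance of the rank weight under right multiplication by $\GL_n(\fq)$, as you spell out. Your handling of the edge case $s = n$ and your justification of the invariance are careful additions, but the approach is the same.
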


\begin{definition}
For any $x$ in $\fqm$ and for any integer $i$, the quantity $x^{q^{i}}$ is denoted by $x^{[i]}$. This notation is extended to vectors  $\xv^{[i]} = (x_{1}^{[i]},\dots{},x_{n}^{[i]})$ and matrices $\Mm^{[i]}= \left ( m_{ij}^{[i]} \right)$.
\end{definition}

We also give two lemmas that will be useful in the sequel.

\begin{lemma} \label{lem:frob_prop}
For any $\Am \in \MS{\ell}{s}{\fqm}$ and  $\Bm \in \MS{s}{n}{\fqm}$, and for any $\alpha$ and $\beta$ in $\fq$:
\[
\left( \alpha \Am + \beta \Bm \right)^{\left[i\right]} = \alpha \Am^{\left[i\right]} + \beta \Bm^{\left[i\right]}
~~~\text{ and }~~~
\left(\Am \Bm \right)^{\left[i\right]}=\Am^{\left[i\right]} \Bm^{\left[i\right]}.
\]
In particular if $\Sm$ is in $\GL_n(\fqm)$ then $\Sm^{\left[i\right]}$ also belongs to $\GL_n(\fqm)$.
\end{lemma}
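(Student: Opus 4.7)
The plan is to reduce both identities to the standard properties of the Frobenius automorphism $\varphi : x \mapsto x^q$ on $\fqm$. Recall that $\varphi$ is a field automorphism whose fixed field is precisely $\fq$, so its $i$-th iterate $x \mapsto x^{[i]} = x^{q^i}$ is also a field automorphism satisfying $\alpha^{[i]} = \alpha$ for every $\alpha \in \fq$. Both claims of the lemma then follow by an entrywise computation, and the key ingredients are additivity, multiplicativity, and fixing of $\fq$.

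For the linearity identity (for which I would silently assume $\Am$ and $\Bm$ share the same dimensions, as otherwise $\alpha \Am + \beta \Bm$ is not defined), I would expand the $(j,k)$ entry of $(\alpha \Am + \beta \Bm)^{[i]}$ as $(\alpha a_{jk} + \beta b_{jk})^{[i]}$. Additivity of the Frobenius distributes the operator over the sum, and multiplicativity combined with $\alpha^{[i]} = \alpha$, $\beta^{[i]} = \beta$ produces $\alpha a_{jk}^{[i]} + \beta b_{jk}^{[i]}$, which is exactly the $(j,k)$ entry of $\alpha \Am^{[i]} + \beta \Bm^{[i]}$.

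For the multiplicative identity, I would write the $(j,k)$ entry of $(\Am\Bm)^{[i]}$ as
\[
\Bigl( \sum_{\ell=1}^s a_{j\ell} b_{\ell k} \Bigr)^{[i]}.
\]
Additivity of the Frobenius pushes the operator inside the sum and multiplicativity splits each product, yielding $\sum_{\ell=1}^s a_{j\ell}^{[i]} b_{\ell k}^{[i]}$, which is the $(j,k)$ entry of $\Am^{[i]} \Bm^{[i]}$.

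Finally, for the invertibility statement, I would start from $\Sm \Sm^{-1} = \Im$ and note that $\Im \in \GL_n(\fq)$, so $\Im^{[i]} = \Im$. Applying the multiplicative identity just proved gives $\Sm^{[i]} (\Sm^{-1})^{[i]} = \Im$, so $\Sm^{[i]}$ is invertible with $(\Sm^{[i]})^{-1} = (\Sm^{-1})^{[i]}$. There is no genuine obstacle in this proof: the statement is purely formal, and the only subtlety is ensuring that the additive identity is interpreted with matrices of matching dimensions (i.e.\ the occurrence of the dimensions $\ell\times s$ and $s\times n$ in the hypothesis is used only for the product).
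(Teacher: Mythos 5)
Your proof is correct and is the standard entrywise argument via the Frobenius automorphism; the paper itself states this lemma without proof, treating it as routine, and your argument (including the derivation of $(\Sm^{[i]})^{-1} = (\Sm^{-1})^{[i]}$ from $\Im^{[i]} = \Im$) is exactly what the omitted proof would be. Your remark that the additive identity implicitly requires matching dimensions is a fair observation about the statement's phrasing and does not affect correctness.
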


\begin{lemma}\label{InvBlockMat}
Let $\Pm=%
\left(
\begin{matrix}
\Am & \ZZ \\
\Cm & \Dm%
\end{matrix}
\right)%
$ where $\Am$ and $\Dm$ are square matrices. Then $\Pm$ is non singular if and only if $\Am$ and $\Dm$ are non singular and  
the inverse of $\Pm$ is:
$$
\Pm^{-1}=%
\left(
\begin{matrix}
\Am^{-1} & \ZZ \\
-\Dm^{-1}\Cm \Am^{-1} & \Dm^{-1}%
\end{matrix}
\right)%
$$
\end{lemma}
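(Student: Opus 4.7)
The plan is to handle the two directions of the equivalence separately, deducing the nonsingularity characterization from the block-triangular shape and establishing the explicit formula by a direct verification.

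For the ``only if'' direction, I would exploit the fact that the upper-right block of $\Pm$ is $\ZZ$, so $\Pm$ is block lower-triangular. A Laplace expansion along the top rows (or, equivalently, reducing to row-echelon form block by block) yields the classical identity $\det(\Pm)=\det(\Am)\det(\Dm)$, which does not require any commutativity assumption between the blocks. Hence $\Pm$ nonsingular forces $\det(\Am)\ne 0$ and $\det(\Dm)\ne 0$, so both $\Am$ and $\Dm$ must be nonsingular.

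For the ``if'' direction together with the explicit formula, I would assume $\Am$ and $\Dm$ are invertible and verify directly that
\[
\Qm \eqdef \begin{pmatrix} \Am^{-1} & \ZZ \\ -\Dm^{-1}\Cm\Am^{-1} & \Dm^{-1} \end{pmatrix}
\]
satisfies $\Pm\Qm=\Im$. Performing the block multiplication, the top-left block gives $\Am\Am^{-1}=\Im$, the top-right block gives $\Am\cdot\ZZ+\ZZ\cdot\Dm^{-1}=\ZZ$, the bottom-right block gives $\Cm\cdot\ZZ+\Dm\Dm^{-1}=\Im$, and the bottom-left block cancels as $\Cm\Am^{-1}+\Dm(-\Dm^{-1}\Cm\Am^{-1})=\Cm\Am^{-1}-\Cm\Am^{-1}=\ZZ$. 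This simultaneously proves that $\Pm$ is invertible and identifies $\Qm$ as its inverse.

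There is no genuine obstacle here, since the statement amounts to a routine check about block matrices. The only things to watch are that the block-size compatibilities are implicit in the hypothesis that $\Am$ and $\Dm$ are square (so that the off-diagonal blocks $\Cm$ and $\ZZ$ have matching dimensions), and that the determinant identity used in the first direction holds over any field, in particular over $\fqm$, which is the setting in which this lemma will be invoked later in the paper.
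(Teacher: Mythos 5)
Your proof is correct and is exactly the standard argument; the paper in fact states this lemma without any proof, treating it as a well-known fact about block lower-triangular matrices. The only cosmetic point is that after checking $\Pm\Qm=\Im$ you should note (or also verify $\Qm\Pm=\Im$) that a one-sided inverse of a square matrix over a field is automatically two-sided, which is immediate.
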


Let us recall that a (linear) code of length $n$ over a finite field $\F$ is a linear subspace of $\F^n$. Elements of a code are called \emph{codeword}. A matrix whose rows form a basis of a code is called a \emph{generator matrix}.
The dual of a code $\CC\subset \F^n$ is the linear space denoted by $\dual{\CC}$ containing vectors $\zv \in \F^n$ such that:
\[
\forall \cv \in \CC, \;\; \sum_{i=1}^n c_i z_i = 0. 
\]
An algorithm $D$ is said to 
decode $t$ errors in a code $\CC \subset \F^n$ if for any $\cv \in \CC$ and for any $\ev \in \F^n$ such that $\norm \ev \le t$ we have $D(\cv + \ev) = \cv$. 
Generally, we call such a vector $\ev$ an \emph{error} vector.
We introduce now an important family of codes known for having an efficient decoding algorithm.

\begin{definition}[Gabidulin code]
Let $\gv \in \fqm^{n}$ such that $\norm{\gv}=n$.
The $(n,k)-$Gabidulin code denoted by $\gab{k}{\gv}$ is the code with a generator matrix $\Gm$ where: 
\begin{equation} \label{gab:genmat}
\Gm=
\begin{pmatrix}
g_{1}^{[0]} & \cdots{} & g_{n}^{[0]} \\
\vdots{}      &              &  \vdots{} \\
g_{1}^{[k-1]} & \cdots{} & g_{n}^{[k-1]}
\end{pmatrix}.
\end{equation}
\end{definition}
Gabidulin codes are known to possess a fast decoding algorithm that 
can decode errors of weight $t$ provided that $t \leq \lfloor \frac{1}{2}(n-k) \rfloor$.
Furthermore the dual of a Gabidulin code $\gab{k}{\gv}$ is also a Gabidulin code.

We end this section by an important well-known property about Gabidulin codes.

\begin{proposition}\label{ScramblingAGabidulinAtRight}
Let $\gab{k}{\gv}$ be a Gabidulin code of length n with generator
matrix $\Gm$ and $\Tm \in \GL_n(\fq)$. 
Then $\Gm\Tm$ is a generator matrix of the Gabidulin code $\gab{k}{\gv \Tm}$
\end{proposition}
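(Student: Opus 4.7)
The plan is to verify directly that $\Gm\Tm$ coincides with the canonical generator matrix of $\gab{k}{\gv'}$, where $\gv' \eqdef \gv\Tm$, by exhibiting its rows in the required Frobenius form.

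First, I would check that $\gv' = \gv\Tm$ is an admissible parameter vector, i.e.\ that $\norm{\gv'} = n$. Since the entries of $\Tm$ lie in $\fq$, the coordinates of $\gv' $ are $\fq$-linear combinations of the coordinates of $\gv$, so they span an $\fq$-subspace of $\sum_{i=1}^n \fq\, g_i$. As $\Tm$ is invertible in $\GL_n(\fq)$, this inclusion is actually an equality, giving $\norm{\gv'}=\norm{\gv}=n$, so $\gab{k}{\gv'}$ is well defined.

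Next, I would compute the rows of $\Gm\Tm$. Row $i+1$ of $\Gm$ is the vector $\gv^{[i]} = (g_1^{[i]},\dots,g_n^{[i]})$ for $0 \le i \le k-1$, so row $i+1$ of $\Gm\Tm$ is $\gv^{[i]}\Tm$. The key identity is
\[
(\gv\Tm)^{[i]} \;=\; \gv^{[i]}\,\Tm^{[i]} \;=\; \gv^{[i]}\,\Tm,
\]
where the first equality is Lemma~\ref{lem:frob_prop} and the second uses that each entry of $\Tm$ lies in $\fq$ and is therefore fixed by the Frobenius map $x \mapsto x^{q}$, hence by $x \mapsto x^{q^i}$. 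Thus row $i+1$ of $\Gm\Tm$ is precisely $(\gv')^{[i]}$, which is row $i+1$ of the matrix defined in \eqref{gab:genmat} with $\gv$ replaced by $\gv'$.

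Finally, I would observe that because $\Tm$ is invertible, $\Gm\Tm$ still has rank $k$, so it is indeed a generator matrix (not merely a list of vectors in the code). Combined with the row-by-row identification above, this shows that $\Gm\Tm$ generates $\gab{k}{\gv\Tm}$. The only step that required any genuine argument is the commutation $(\gv\Tm)^{[i]} = \gv^{[i]}\Tm$, and the crucial hypothesis making it work, hence the whole proposition, is that $\Tm$ has entries in the base field $\fq$ rather than in the extension $\fqm$.
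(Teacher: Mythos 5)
Your proof is correct and follows essentially the same route as the paper, whose entire argument is the single identity $(\gv\Tm)^{[i]} = \gv^{[i]}\Tm$ obtained from Lemma~\ref{lem:frob_prop} together with $\Tm^{[i]}=\Tm$ for $\Tm$ over $\fq$. You additionally spell out the checks that $\norm{\gv\Tm}=n$ and that $\Gm\Tm$ retains rank $k$, which the paper leaves implicit; these are welcome but do not change the approach.
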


\begin{proof}

From Lemma~\ref{lem:frob_prop}, we have $\left(\gv \Tm \right)^{\left[ i \right]}=\gv^{{\left[ i \right]}}\Tm$. \qed
\end{proof}

\section{Rank Metric Encryption Schemes}

The concept of rank metric cryptography appeared in \cite{GPT91} where the authors propose a 
public key encryption scheme using codes in a rank metric framework. They  
adapted McEliece's general idea \cite{M78} developed for the Hamming metric to the rank metric context. The key tool in the design is to focus on linear codes  having a fast rank-metric decoding algorithm like Gabidulin codes.
In this section, we recall the general principle that underlies all the existing rank encryption metric schemes.

During the key generation phase, the integers $k$, $\ell$, $n$ and 
$m$ are chosen such that $k<n\leq m$ and $0 \le \ell \ll n$. It then 
randomly picks  $\gv \in \fqm^n$ with $\norm{\gv} = n$ and
defines  $\Gm \in \MS{k}{n}{\fqm}$ as in \eqref{gab:genmat}, that is to say $\Gm$ is
a generator matrix of the Gabidulin code $\gab{k}{\gv}$. The error-correcting capacity of  
$\gab{k}{\gv}$ is denoted by $t \eqdef \lfloor \frac{1}{2} (n - k) \rfloor$.
An important step in the key generation is the ``hiding'' phase where $\Gm$ undergoes a 
transformation to mask the algebraic structure of Gabidulin codes. This transformation  is  actually a 
probabilistic algorithm  that adds some randomness to its input.
Originally, the authors  in \cite{GPT91} proposed to use  a \emph{distortion} 
transformation $\distor  : \fqm^{k \times n} \longrightarrow \fqm^{k \times n}$ that 
sends any $\Gm$ to $\distor(\Gm) = \Sm(\Gm + \Xm)$ where $\Xm$ is a random matrix from 
$\fqm^{k\times n}$ with a prescribed rank 
$t_{\Xm}$ and $\Sm$ is an invertible matrix.  The public key is then 
$\Gp = \distor(\Gm)$ with the parameter $\tp = t - t_{\Xm}$ while the private key is $(\Sm, \Gm)$.
The  encryption  algorithm takes as input a plaintext $\mv \in \fqm^{k}$ and generates a random
$\ev \in \fqm^{n}$ such that $\norm{\ev} \leq \tp$ in order to compute the
ciphertext  $\cv=\mv \Gp + \ev$.
In the  decryption step the decoding algorithm of the Gabidulin code  $\gab{k}{\gv}$ is applied to the 
ciphertext.  This word can be decoded since the underlying codeword is corrupted by the error vector 
$\mv \Sm \Xm +\ev$ whose rank weight is $\norm{\mv \Sm \Xm +\ev} \le \norm{\mv \Sm \Xm} + \norm{\ev} \le t$ since  
by Corollary~\ref{cor:uprank} we have $ \norm{\mv \Sm \Xm} \leq t_{\Xm}$.

However, Gibson proved \cite{G95,G96} that the GPT encryption scheme \cite{GPT91}  is vulnerable 
to a polynomial time key recovery attack.
Consequently, Gabidulin and Ourivski proposed  in \cite{GO01} a reparation by considering a more general 
hiding transformation combining a distortion matrix $\Xm$ and a right column scrambler $\Pm$. The hidden generator 
matrix is 
more precisely of the form:
\begin{equation} \label{GPTX1X2}
\distor(\Gm) = \Sm \left(\Xm_{1} \mid \Gm + \Xm_{2}\right) \Pm
\end{equation}
where $\Xm_{1} \in \MS{k}{\ell}{\fqm}$, $\Xm_{2} \in \MS{k}{n}{\fqm}$ such that 
$\norm{\Xm_{2}}<t$  and  $\Pm \in \GL_{n+\ell}(\fq)$. 
The public generator matrix is again $\Gp \eqdef \distor(\Gm)$ which constitutes 
the public key with the public parameter $\tp \eqdef t - t_2$ where $t_2 \eqdef \norm{\Xm_{2}}$.
The decryption computes $\Pm^{-1} =  ( \Qm_1 \mid  \Qm_2)$ where $\Qm_1 \in \MS{(n+\ell)}{\ell}{\fq}$ and
$\Qm_2 \in \MS{(n+\ell)}{n}{\fq}$.
The last $n$ components of $\cv \Pm^{-1}$ is the vector
$\mv \Sm \Gm+\mv\Sm\Xm_{2}+ \ev\Qm_2$ and
since $\norm{\ev \Qm_2} \leq \norm{\ev}$ and  $\norm{\mv \Sm \Xm_{2}} \leq \norm{\Xm_{2}}$, it follows that $%
\norm{\mv \Sm \Xm_{2} + \ev \Qm_2} \leq t$.
Applying a fast decoding algorithm to
the last $n$ components of $\cv \Pm^{-1}$ 
 allows the legitimate user to get $\mv \Sm$
and easily $\mv$.

We now state our first result about Gabidulin and Ouriviski reparation which proves 
that we can always  consider $\Xm_2 = 0$.
 
\begin{proposition}\label{prop:XtoX2=0}
Let $\Gp$ be as in \eqref{GPTX1X2} and assume that $\norm{\Xm_{2}} = t_2$.
There exist $\Pm^* \in \GL_{n+\ell}(\fq)$, $\Xm^* \in \MS{k}{(\ell + t_2)}{\fqm}$ 
and a matrix $\Gm^*$ that generates an $(n - t_2)-$Gabidulin code  $\gab{k}{\gv^*}$
such that
\begin{equation} \label{eq:transf}
\Gp = \Sm \left (\Xm^*  \mid \Gm^* \right) \Pm^*.
\end{equation}
Furthermore, the error correction capability $t^*$ of $\gab{k}{\gv^*}$  is equal to $t - \frac{1}{2}t_2$, and hence
$t^* > \tp$.
\end{proposition}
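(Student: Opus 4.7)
The plan is to use Proposition \ref{prop:rank_reduction} to annihilate $n-t_2$ columns of $\Xm_2$ via an $\fq$-invertible column transformation, then absorb the remaining $t_2$ nonzero columns together with the matching columns of $\Gm$ into an enlarged left distortion block $\Xm^*$, and identify what is left of $\Gm$ as the generator of a genuine, shorter Gabidulin code. The auxiliary $\fq$-column scrambler used in this manipulation will then be combined with $\Pm$ into the new scrambler $\Pm^*$.

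Concretely, I would first apply Proposition \ref{prop:rank_reduction} to $\Xm_2$: since $\norm{\Xm_2} = t_2 < n$, there exist $\Tm \in \GL_n(\fq)$ and $\Xm_2^* \in \MS{k}{t_2}{\fqm}$ with $\norm{\Xm_2^*} = t_2$ such that $\Xm_2 \Tm = (\Xm_2^* \mid \ZZ)$. I would embed $\Tm$ in the block-diagonal matrix
\[
\Pm' \eqdef
\begin{pmatrix}
\Im_\ell & \ZZ \\
\ZZ & \Tm
\end{pmatrix} \in \GL_{n+\ell}(\fq),
\]
so that $(\Xm_1 \mid \Gm + \Xm_2)\Pm' = \bigl(\Xm_1 \mid \Gm\Tm + (\Xm_2^* \mid \ZZ)\bigr)$. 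By Proposition \ref{ScramblingAGabidulinAtRight}, $\Gm\Tm$ is a generator matrix of $\gab{k}{\gv\Tm}$; writing $\gv\Tm = (\gv_1^* \mid \gv^*)$ and $\Gm\Tm = (\Gm_1' \mid \Gm^*)$ with $\gv^* \in \fqm^{n-t_2}$, the key observation is that since $\Tm \in \GL_n(\fq)$, the vector $\gv\Tm$ still has $\fq$-rank $n$, hence any subfamily of its entries is $\fq$-linearly independent. In particular $\norm{\gv^*} = n - t_2$, which makes $\Gm^*$ a bona fide generator matrix of the Gabidulin code $\gab{k}{\gv^*}$ of length $n - t_2$. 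Setting
\[
\Xm^* \eqdef (\Xm_1 \mid \Gm_1' + \Xm_2^*) \in \MS{k}{\ell + t_2}{\fqm}
\quad\text{and}\quad
\Pm^* \eqdef (\Pm')^{-1}\Pm \in \GL_{n+\ell}(\fq)
\]
then yields $\Gp = \Sm (\Xm^* \mid \Gm^*) \Pm^*$, which is identity \eqref{eq:transf}.

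For the final claim, $\gab{k}{\gv^*}$ has length $n - t_2$ and dimension $k$, so its error correction capability is $t^* = \lfloor \tfrac{1}{2}(n - k - t_2) \rfloor$, essentially $t - \tfrac{1}{2} t_2$, which strictly exceeds $\tp = t - t_2$ as soon as $t_2 \ge 1$. I expect no serious obstacle: the argument is bookkeeping once Proposition \ref{prop:rank_reduction} is in hand, and the only subtle point is observing that after the column reduction of $\Xm_2$ the matching coordinates of $\gv\Tm$ automatically retain full $\fq$-rank, so the truncated code is indeed a Gabidulin code.
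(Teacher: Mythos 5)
Your proposal is correct and follows essentially the same route as the paper's own proof: reduce $\Xm_2$ via Proposition~\ref{prop:rank_reduction}, embed the resulting $\Tm$ into a block-diagonal matrix, absorb the first $t_2$ columns of $\Gm\Tm$ plus the residual part of $\Xm_2$ into the enlarged distortion block, and read off the length-$(n-t_2)$ Gabidulin code from the remaining columns. Your explicit remark that the truncated vector $\gv^*$ retains full $\fq$-rank is a point the paper leaves implicit, but the argument is otherwise identical.
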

\begin{proof}
Since $\norm{\Xm_2} = t_2$ then by Proposition~\ref{prop:rank_reduction} there exist  $\Tm_2$ in $\GL_{n}(\fq)$ and $\Xm'_2$ in $\MS{k}{t_2}{\fqm}$ such that $
\Xm_2 \Tm_2 =\left( \Xm'_2 \mid \ZZ \right)$. 
So by letting $\Tm =
\begin{pmatrix}
\Im_\ell & \ZZ \\
\ZZ & \Tm_2
\end{pmatrix}$
we then have:
\begin{eqnarray*}
\Gp  = \Sm \left(\Xm_{1} \mid \Gm + \Xm_{2}\right) \Pm 
        &= & \Sm \left(\Xm_{1} \mid \Gm \Tm_2 + \Xm_{2} \Tm_2 \right) \Tm^{-1} \Pm \\
      &= & \Sm \left(\Xm_{1} \mid \Gm' + \Xm_{2} \Tm_2 \right) \Qm   
\end{eqnarray*}
where  $\Gm' = \Gm \Tm_2$ and $\Qm =  \Tm^{-1} \Pm $. Note that 
$\Gm'$ generates the $(n,k)-$Gabidulin code $\gab{k}{\gv'}$ with $\gv' = \gv \Tm_2 = (g'_1,\dots{},g'_n)$.
Let us decompose $\Gm'$ as $(\Gm'_1 \mid \Gm'_2)$
where $\Gm'_1 \in \MS{k}{t_2}{\fqm}$  and  $\Gm'_2 \in \MS{k}{(n - t_2)}{\fqm}$ we then have:
$$
\Gm' + \Xm_{2} \Tm_2 
=  \left( \Gm'_1 + \Xm'_{2} \mid \Gm'_2 \right ) 
$$
By setting $\Xm = \left(\Xm_1 \mid \Gm'_1 + \Xm'_2 \right)$ we get \eqref{eq:transf} and $\Gm'_2$ 
generates the $(n-t_2,k)-$Gabidulin $\gab{k}{\gv'_2}$ where $\gv'_2 = (g'_{t_2+1},\dots{},g'_n)$. 
The error-correction capability $t^*$ of $\gab{k}{\gv'_2}$   is given by 
$t^* = \frac{1}{2} ( n-t_2-k) =  t - \frac{1}{2}t_2$ which implies $t^* > t - t_2$.
\qed
\end{proof}

The first important consequence of Proposition~\ref{prop:XtoX2=0} is the possibility for a cryptanalyst who is able to derive $(\Sm, \Gm^*, \Pm^*)$ from $\Gp$  so that \eqref{eq:transf} is satisfied
to decipher any ciphertext $\cv=\mv \Gp + \ev$ with $\norm{\ev} \leq \tp$. 
Thus any successful structural attack on the description  \eqref{eq:transf} leads to
a successful attack on \eqref{GPTX1X2} and conversely since  \eqref{eq:transf}
corresponds to the special case  where $\Xm_2= \ZZ$. 
Therefore the security of the scheme given \cite{GO01} is equivalent to the one of a scheme where $\Xm_2= \ZZ$.

\section{Distinguishing Properties of Gabidulin Codes}

We recall important algebraic properties about Gabidulin codes. It will 
explain why many attacks occur when the underlying code is a Gabidulin one. One key property is 
that Gabidulin codes can be easily distinguished from random linear codes. This singular behavior has been presicely exploited by Overbeck \cite{O05,O05a,O08} to mount attacks.

\begin{definition}
For any integer $i \ge 0$ let
$\Lambda _{i} :  \MS{k}{n}{\fqm}  \longrightarrow  \MS{i k }{n}{\fqm}$ be the $\fq$-linear operator that
maps any $\Mm$ from $\MS{k}{n}{\fqm}$ to $\Lambda_i(\Mm)$ where by definition:
\begin{equation}
\Lambda_i(\Mm) \eqdef 
\begin{pmatrix}
\Mm^{[0]} \\
\vdots{} \\
\Mm^{[i]}
\end{pmatrix}.
\end{equation}
For any code $\code{G}$ generated by a matrix $\Gm$ we denote by $\Lambda_i(\code{G})$ the code generated by  $\Lambda_i(\Gm)$.
\end{definition}

\begin{proposition} \label{prop:dsg_gab}
Let $\gv$  be in $\fqm^n$ with $\norm{\gv} = n$ with $n \leq m$. For any integers $k$ and $i$ such that $k \leq n$ and
$i\leq n-k-1$ we have:
\begin{equation}
\Lambda _{i}\big(\gab{k}{\gv}\big) = \gab{k+i}{\gv}.
\end{equation}
\end{proposition}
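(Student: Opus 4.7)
The plan is to prove the equality of codes by showing that both $\Lambda_i\bigl(\gab{k}{\gv}\bigr)$ and $\gab{k+i}{\gv}$ have the same row space via their natural generator matrices. The argument is essentially a bookkeeping exercise on Frobenius exponents; no deep obstacle is expected.

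First I would unfold the definitions. Starting from the Moore-style generator matrix $\Gm$ of $\gab{k}{\gv}$ given in \eqref{gab:genmat}, Lemma~\ref{lem:frob_prop} yields
\[
\Gm^{[j]}=
\begin{pmatrix}
g_{1}^{[j]} & \cdots & g_{n}^{[j]} \\
\vdots      &        & \vdots      \\
g_{1}^{[k-1+j]} & \cdots & g_{n}^{[k-1+j]}
\end{pmatrix}
\]
for every $j\ge 0$, so that the rows of $\Lambda_i(\Gm)$ are precisely the vectors $\gv^{[r+j]}=\bigl(g_1^{[r+j]},\dots,g_n^{[r+j]}\bigr)$ as $(r,j)$ ranges over $\{0,\dots,k-1\}\times\{0,\dots,i\}$.

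Next I would observe that the set of exponents $\{r+j : 0\le r\le k-1,\ 0\le j\le i\}$ equals $\{0,1,\dots,k+i-1\}$, and every value in that range is attained. Hence the rows of $\Lambda_i(\Gm)$ are exactly the rows of the canonical generator matrix of $\gab{k+i}{\gv}$ (with repetitions allowed), which gives both inclusions of row spaces simultaneously: every row of $\Lambda_i(\Gm)$ lies in $\gab{k+i}{\gv}$, and conversely every row $\gv^{[s]}$ of the standard generator of $\gab{k+i}{\gv}$ (for $0\le s\le k+i-1$) appears among the rows of $\Lambda_i(\Gm)$.

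Finally I would check the side conditions ensuring $\gab{k+i}{\gv}$ is a well-defined Gabidulin code of the claimed type: the hypothesis $i\le n-k-1$ gives $k+i\le n-1<n$, so $k+i$ is a legitimate dimension, and the assumption $\norm{\gv}=n$ guarantees that the Moore matrix built from $\gv$ with $k+i$ rows has rank $k+i$, so the row space really has dimension $k+i$. With these checks the equality $\Lambda_i\bigl(\gab{k}{\gv}\bigr)=\gab{k+i}{\gv}$ follows. The only potential subtlety is to be explicit about why the exponent-set calculation covers every integer in $\{0,\dots,k+i-1\}$ without gap, but this is immediate from the fact that the sumset of two contiguous integer intervals starting at $0$ is again such an interval.
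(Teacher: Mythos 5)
Your proof is correct; the paper states this proposition without proof (it is recalled as a known distinguishing property of Gabidulin codes), and your argument --- identifying the rows of $\Lambda_i(\Gm)$ with the Frobenius powers $\gv^{[r+j]}$ and observing that the exponent sumset $\{0,\dots,k-1\}+\{0,\dots,i\}$ is exactly $\{0,\dots,k+i-1\}$, so that $\Lambda_i(\Gm)$ and the canonical Moore generator of $\gab{k+i}{\gv}$ have identical row sets up to repetition --- is the standard one. The side conditions you verify ($k+i\le n-1<n\le m$ and $\norm{\gv}=n$, which guarantee that the $(k+i)$-row Moore matrix has full rank and hence that $\gab{k+i}{\gv}$ is a bona fide $(n,k+i)$-Gabidulin code) are precisely what the hypotheses $i\le n-k-1$ and $n\le m$ are there for.
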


The importance of $\Lambda _{i}$ becomes clear when one compares the dimension
of the code spanned by $\Lambda _{i}(\Gm)$ for a randomly drawn matrix $\Gm$ and  the dimension obtained when
$\Gm$ generates a Gabidulin code.

\begin{proposition}
if $\CA \subset \fqm^n$ is a code generated by a random matrix from $\MS{k}{n}{\fqm}$ then with a high probability:
\begin{equation}
\dim \Lambda _{i}(\CA) = 
\min\big \{ n,(i+1)k\big\}
\end{equation}
\end{proposition}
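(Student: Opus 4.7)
My plan begins with the upper bound, which is immediate since $\Lambda_i(\Gm)$ has $(i+1)k$ rows and $n$ columns, so its rank is at most $\min\{n,(i+1)k\}$. Setting $r \eqdef \min\{n,(i+1)k\}$, I would then show that the set of $\Gm \in \MS{k}{n}{\fqm}$ for which $\Lambda_i(\Gm)$ has rank strictly less than $r$ has negligible density as a function of $q^m$.

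To do this, I would treat the entries of $\Gm$ as formal variables $X_{l,c}$, so that the entries of $\Lambda_i(\Gm)$ are monomials $X_{l,c}^{q^j}$ with $j=0,\dots,i$. Then every $r\times r$ minor of $\Lambda_i(\Gm)$ is a polynomial in $\fq[X_{l,c}]$ of total degree at most $r\cdot q^i$, and the rank of $\Lambda_i(\Gm)$ equals $r$ iff at least one such minor evaluates to a nonzero element of $\fqm$.

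The key step is to verify that some $r\times r$ minor is not the zero polynomial in $\fq[X_{l,c}]$. Equivalently, one needs to exhibit a single matrix $\Gm_0$---over $\fqm$ for $m$ sufficiently large, or over the algebraic closure of $\fq$---for which $\Lambda_i(\Gm_0)$ has rank exactly $r$. A clean route is to take the entries of $\Gm_0$ algebraically independent over $\fq$ and argue that no nontrivial polynomial identity can then force a rank drop; alternatively one can construct $\Gm_0$ explicitly, e.g.\ with columns whose coordinates live in sufficiently generic $\fq$-linearly independent cosets, and compute the rank on a convenient submatrix.

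Once a nonzero minor polynomial of degree $\le r\cdot q^i$ in $kn$ variables is in hand, the Schwartz--Zippel lemma applied over $\fqm$ bounds the fraction of $\Gm\in\fqm^{kn}$ on which it vanishes by $r\cdot q^{i}/q^m$. Consequently $\dim\Lambda_i(\CA)=r$ holds with probability at least $1-r\cdot q^{i-m}$, which tends to $1$ as $m$ grows, giving the claim. The hard part will be the existence of $\Gm_0$: pure counting is not enough to rule out that every maximal minor vanishes identically as a formal polynomial, so one must invoke an algebraic-independence/lifting argument or a careful explicit construction to certify nontriviality.
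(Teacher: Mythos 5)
The paper itself states this proposition without proof, so there is nothing to compare against directly; judged on its own terms, your plan (upper bound by shape, then generic full rank via a nonvanishing maximal minor plus Schwartz--Zippel) is the standard and correct strategy, and your degree bound $r\cdot q^i$ and the resulting failure probability $r\cdot q^{i-m}\le n\,q^{-k-1}$ (using $i\le n-k-1\le m-k-1$) are fine. The genuine gap is exactly the one you flag: you never certify that some $r\times r$ minor of $\Lambda_i(\Gm)$ is a nonzero polynomial in the entries of $\Gm$. Of your two proposed escape routes, the first does not work as stated: the algebraic closure of $\fq$ is algebraic over $\fq$, so it contains no tuples of algebraically independent elements, and passing to a transcendental extension merely rephrases the problem --- ``no polynomial identity forces a rank drop at a generic point'' is, by definition, equivalent to the formal nonvanishing you are trying to prove, not a means of establishing it.

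The gap is closed by your second route, an explicit witness, and here is one: pick $\gv\in\fqm^n$ with $\norm{\gv}=n$ (possible since $n\le m$) and let $\Gm_0$ have rows $\gv^{[0]},\gv^{[i+1]},\dots,\gv^{[(k-1)(i+1)]}$. Then the rows of $\Lambda_i(\Gm_0)$ are $\gv^{[l(i+1)+j]}$ for $0\le l\le k-1$, $0\le j\le i$, i.e.\ $\Lambda_i(\Gm_0)$ is the Moore matrix on $\gv$ with $(i+1)k$ rows. A nontrivial $\fqm$-linear relation among its first $\min\{(i+1)k,n\}$ rows would give a nonzero linearized polynomial of $q$-degree less than $n$ vanishing on an $n$-dimensional $\fq$-space, which is impossible; hence $\Lambda_i(\Gm_0)$ has rank exactly $r=\min\{n,(i+1)k\}$, some $r\times r$ minor is nonzero at $\Gm_0$, and the corresponding formal minor is a nonzero polynomial. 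With that inserted, your Schwartz--Zippel step completes the proof. (A minor caveat: the argument as written requires $i<m$ so that the formal degrees $q^j$ stay below $q^m$; this holds in the paper's range $i\le n-k-1$.)
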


In the case of a Gabidulin code, we get a different  situation as explained by Proposition~\ref{prop:dsg_gab}.
Thus there is property  that is computable in polynomial time distinguishes  a Gabidulin code 
from a random one. This can be used in a cryptanalysis context. In fact, Overbeck \cite{O08} 
has proven that, for a public matrix $\Gm_p$ given by equation \eqref{GPTX1X2} with $\Xm_2 = \ZZ$ 
(in particular all the entries of $\Pm$ belong to $\fq$), 
it is possible (under certain conditions) to find in polynomial time an alternative decomposition of 
$\Gm_p$ of the from $\Sm^* \left( \Xm^* \mid \Gm^* \right) \Pm^* $ using the operator $\Lambda_i$. 
This decomposition allows to decrypt any ciphertext computed with $\Gm_p$. 
The reader can refer to Appendix~\ref{appendix:overb_attack} for details concerning attack.  
The key reason explaining its success is given by the following proposition.

\begin{proposition}
Let us consider $\ell$, $k$ and $n$ be positive integers with $\ell < n$ and $1 \leq k < n$. Let
$\Gm$ be in $\MS{k}{n}{\fqm}$ as a generator matrix of a Gabidulin code, and $\Xm$ 
be a randomly drawn matrix from $\MS{k}{\ell}{\fqm}$. Denote $\CA$ as the
code defined by the generator matrix $\left( \Xm \mid \Gm\right) $. 
Then for any integer $i \ge 0$
\begin{equation} \label{ineq:dsg}
k+i ~ \leq ~ \dim \Lambda _{i}\left(\CA \right) ~ \leq ~ k + i + d
\end{equation}
where $d = \min \Big \{ (i+1)k, \ell \Big \}$.
\end{proposition}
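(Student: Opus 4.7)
The plan is to exploit the block structure of a generator matrix of $\Lambda_i(\CA)$, namely $\Lambda_i((\Xm \mid \Gm)) = (\Lambda_i(\Xm) \mid \Lambda_i(\Gm))$, and bound its row rank from below and above using the two column blocks separately.

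First, for the lower bound $k+i \leq \dim \Lambda_i(\CA)$, I would observe that the right block $\Lambda_i(\Gm)$ is, by Lemma~\ref{lem:frob_prop}, a generator matrix of $\Lambda_i(\gab{k}{\gv})$, which by Proposition~\ref{prop:dsg_gab} equals the Gabidulin code $\gab{k+i}{\gv}$ (in the regime $i \leq n - k - 1$ that is relevant to the cryptographic application) and therefore has dimension $k+i$. Since the rank of a block matrix is at least the rank of either column block, this immediately gives $\dim \Lambda_i(\CA) \geq k + i$.

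For the upper bound, the plan is to apply the elementary $\fqm$-linear algebra inequality
\[
\dim W \leq \dim \pi_{1}(W) + \dim \pi_{2}(W)
\]
valid for any subspace $W \subseteq \fqm^{\ell} \oplus \fqm^{n}$, where $\pi_{1}$ and $\pi_{2}$ are the coordinate projections onto $\fqm^{\ell}$ and $\fqm^{n}$ respectively. This is a one-line consequence of the rank--nullity theorem applied to $\pi_{1}|_{W}$, since $\ker(\pi_{1}|_{W})$ embeds into $\pi_{2}(W)$ via $(\mathbf{0},\mathbf{v}) \mapsto \mathbf{v}$. Taking $W = \Lambda_i(\CA)$, the second summand equals $\dim \pi_{2}(\Lambda_i(\CA)) = \dim \Lambda_i(\gab{k}{\gv}) = k + i$ as before, while the first summand is the dimension of the $\fqm$-span of the rows of $\Lambda_i(\Xm)$, which is bounded above both by the ambient dimension $\ell$ of the left block and by the total number $(i+1)k$ of available rows. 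Hence $\dim \pi_{1}(\Lambda_i(\CA)) \leq d$ and combining the two estimates yields $\dim \Lambda_i(\CA) \leq (k+i) + d$.

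The main subtlety, rather than a genuine obstacle, is ensuring that Proposition~\ref{prop:dsg_gab} is applied in a regime where $k + i \leq n$ so that $\Lambda_i(\gab{k}{\gv})$ indeed has dimension exactly $k+i$ (for larger $i$ the right block saturates $\fqm^{n}$ and both bounds would need minor rewording). Beyond this parameter check, the argument reduces to the standard observation that, for a direct-sum decomposition of the ambient space, the dimension of any subspace is controlled by the sum of the dimensions of its two block projections.
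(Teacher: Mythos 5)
Your argument is correct and is essentially the same linear algebra as the computation the paper sketches in Appendix~\ref{appendix:overb_attack} (the proposition itself is stated without proof in the main text): row-reducing so that $\Lambda_i(\Gm)$ collapses to a basis of $\gab{k+i}{\gv}$ and counting the rank of the leftover block $\Xm^{**}$ in \eqref{reduction1} is exactly your projection inequality $\dim W \leq \dim \pi_1(W) + \dim \pi_2(W)$ written in matrix form, and your lower bound via the rank of the right-hand block is the standard companion estimate. Your caveat that the bounds as stated require $k+i \leq n$ (i.e.\ $i \leq n-k-1$, the only regime used in the attack) is well taken and applies equally to the paper's own formulation, which asserts the inequality for all $i \geq 0$.
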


Note that by construction $\ell \le n$ and in Overbeck's attack, the integer  $i$ is equal to $n-k - 1$ so that we have both 
$d = \ell$ and, with high probability, the upper bound in \eqref{ineq:dsg} is  actually an equality, namely 
\[
\dim \Lambda _{n-k-1}\left(\CA \right) = k + (n - k - 1) + d = n + \ell - 1.
\]
This implies that the dimension of $\dual{\Lambda _{i}\left(\CA \right)}$ is equal to $1$. This fact is 
then harnessed in \cite{O08} to recover an equivalent Gabidulin code which enables to decrypt any ciphertext. 

\begin{proposition}[\cite{O08}]
Assume that the public key is $\Gp = \Sm \left(\Xm \mid \Gm \right) \Pm$
with $\Xm \in \MS{k}{\ell}{\fqm}$, $\Pm \in \GL_{n+\ell}(\fq)$ and $\Gm$ generates an $(n,k)-$Gabdidulin code.
If $dim \Lambda _{n-k-1}\left(\CA \right)$ is equal to $1$ then it is possible to recover
with $O\left( (n+\ell)^3 \right)$ field operations alternative matrices $\Xm^* \in \MS{k}{\ell}{\fqm}$, $\Pm^* \in \GL_{n+\ell}(\fq)$ and $\Gm^*$ which generate an $(n,k)-$Gabdidulin code such that 
 \[
 \Gp = \Sm \left(\Xm^* \mid \Gm^* \right) \Pm^*
 \]
\end{proposition}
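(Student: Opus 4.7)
The plan is to apply the Frobenius-based distinguisher of the previous proposition to $\Gp$, to extract from its one-dimensional dual a basis in $\fq$ of the space spanned by the first $\ell$ rows of $\Pm$, and to complete this data into the desired column scrambler $\Pm^{*}$.

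First, I would compute $\Lambda_{n-k-1}(\Gp)$. Since $\Pm \in \GL_{n+\ell}(\fq)$ and $\Sm \in \GL_{k}(\fqm)$, Lemma~\ref{lem:frob_prop} gives $\Pm^{[j]}=\Pm$ and $\Sm^{[j]} \in \GL_k(\fqm)$ for every $j$, so $\Lambda_{n-k-1}(\Gp)$ is row-equivalent to $\Lambda_{n-k-1}((\Xm \mid \Gm))\,\Pm$ and spans the code $\Lambda_{n-k-1}(\CA)\cdot\Pm$. By Proposition~\ref{prop:dsg_gab}, $\Lambda_{n-k-1}(\Gm)$ generates $\gab{n-1}{\gv}$; combined with the hypothesis that $\bigl(\Lambda_{n-k-1}(\CA)\bigr)^{\bot}$ has dimension one, this forces
\[
\bigl(\Lambda_{n-k-1}(\CA)\bigr)^{\bot} \;=\; \fqm\cdot(\ZZ \mid \hv),
\]
where $\hv$ generates the dual Gabidulin code $\gab{n-1}{\gv}^{\bot}=\gab{1}{\hv}$ and in particular satisfies $\norm{\hv}=n$. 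One Gaussian elimination then returns a generator $\bar{\hv}$ of the one-dimensional dual of $\Lambda_{n-k-1}(\Gp)$, and by duality $\bar{\hv}\,\Pm^{T}=(\ZZ \mid \hv)$ up to a non-zero scalar.

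Next, I would stack the Frobenius images of $\bar{\hv}$ into $\Hm \eqdef \Lambda_{n-1}(\bar{\hv}) \in \MS{n}{n+\ell}{\fqm}$. Using again $\Pm^{[j]}=\Pm$, one obtains $\Hm\,\Pm^{T}=(\ZZ \mid \Lambda_{n-1}(\hv))$; since $\norm{\hv}=n$, the right-hand block is an invertible $n\times n$ Moore matrix. Hence $\Hm$ has rank exactly $n$ and its right kernel $\Km \subset \fqm^{n+\ell}$ has $\fqm$-dimension $\ell$. Inverting the relation, $\Km$ coincides with the $\fqm$-span of the first $\ell$ rows of $\Pm$ seen as column vectors, all of which lie in $\fq^{n+\ell}$. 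Thus $\Km$ admits an $\fq$-basis, and I would recover one by expanding any $\fqm$-basis coordinatewise in a fixed $\fq$-basis of $\fqm$ and selecting a maximal $\fq$-linearly independent subset; this yields $\ell$ vectors in $\fq^{n+\ell}$ spanning the same $\fq$-subspace as the first $\ell$ rows of $\Pm$.

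Finally, I would complete these $\ell$ vectors arbitrarily to an invertible matrix $\Pm^{*} \in \GL_{n+\ell}(\fq)$. By construction there exist $\mat{U} \in \GL_\ell(\fq)$, $\mat{V} \in \fq^{n\times\ell}$ and $\mat{W} \in \GL_n(\fq)$ with
\[
\Pm^{*} = \begin{pmatrix} \mat{U} & \ZZ \\ \mat{V} & \mat{W} \end{pmatrix} \Pm.
\]
Inverting the left factor via Lemma~\ref{InvBlockMat}, the last $n$ columns of $\Pm\,(\Pm^{*})^{-1}$ form an $(n+\ell)\times n$ block with zero top and $\mat{W}^{-1}$ at the bottom. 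Consequently $\Gp\,(\Pm^{*})^{-1}= \Sm\,(\Xm^{*} \mid \Gm^{*})$ with $\Gm^{*}=\Gm\,\mat{W}^{-1}$ and $\Xm^{*}=\Xm\,\mat{U}^{-1}-\Gm\,\mat{W}^{-1}\mat{V}\,\mat{U}^{-1} \in \MS{k}{\ell}{\fqm}$. By Proposition~\ref{ScramblingAGabidulinAtRight}, $\Gm^{*}$ generates the $(n,k)$-Gabidulin code $\gab{k}{\gv\,\mat{W}^{-1}}$, giving the required decomposition $\Gp=\Sm\,(\Xm^{*} \mid \Gm^{*})\,\Pm^{*}$. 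The dominant cost is the two linear-algebra steps (dual and kernel) on $O(n+\ell)$-sized matrices over $\fqm$, i.e.\ $O\bigl((n+\ell)^3\bigr)$ field operations. The main obstacle I anticipate is justifying that $\Km$ is truly defined over $\fq$ so that the basis extraction is well-posed; this is precisely what the hypothesis $\dim\bigl(\Lambda_{n-k-1}(\CA)\bigr)^{\bot}=1$ buys, via the explicit identification of $\Km$ as the $\fqm$-span of $\ell$ fixed vectors in $\fq^{n+\ell}$.
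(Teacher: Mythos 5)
Your proposal is correct and follows essentially the same route as the paper's Appendix~\ref{appendix:overb_attack}: identify the one-dimensional dual of $\Lambda_{n-k-1}(\Cp)$ as $\fqm\,(\ZZ \mid \hv)(\Pm^{-1})^{T}$, recover from it the $\fq$-span of the first $\ell$ rows of $\Pm$, complete to a matrix in $\GL_{n+\ell}(\fq)$, and conclude via the block-triangular factorization, Lemma~\ref{InvBlockMat} and Proposition~\ref{ScramblingAGabidulinAtRight}. The only cosmetic difference is that you extract that $\ell$-dimensional $\fq$-space as the right kernel of the Moore matrix $\Lambda_{n-1}(\bar{\hv})$ over $\fqm$, whereas the paper obtains the same space directly as the $\fq$-kernel of the single dual vector $\vv$ (the rows $\Tm$ with $\vv\Tm^{T}=(\ZZ\mid\hv')$); both are equivalent $O\bigl((n+\ell)^{3}\bigr)$ linear-algebra steps.
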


Overbeck's attack uses crucially  two important facts (see
 Appendix~\ref{appendix:overb_attack} for more details): 
 the column scrambler matrix $\Pm$ is defined on the based field $\fq$, 
 and the codimension of $\Lambda _{n-k-1}\left(\CA \right)$ is $1$.
Several works propose to resist to Overbeck's attack  either by taking special 
distortion matrix so that the second property is not true as in \cite{L10,RGH10}, or 
by taking a column scrambler matrix defined over the extension field $\fqm$ as in 
\cite{G08,GRH09,RGH11}. In this paper, we  solely concentrate on the second approach. 
In \cite{GRS16,HMR16}  new generic 
decoding algorithms are presented whereas  our approach is directed towards 
recovering the structure of a Gabidulin code. We will prove that 
all the existing schemes \cite{G08,GRH09,RGH11} can be broken simply with the techniques developed in \cite{O08}. 

\section{Gabidulin's General Reparation} \label{sec:gab_variant}

In this section, we focus on the reparation given in \cite{G08}. This paper is the first to consider 
a column scrambler matrix defined over the extension field. We describe only the key generation and decryption steps of the scheme since the encryption operation is not modified. 
To the best of our knowledge, no structural attack has been mounted against this description. 
The author claimed that Overbeck's attack is not applicable. But in Proposition~\ref{prop:crypt:rep},
we prove that it is still possible to find an alternative private
key  using precisely Overbeck's technique.

\paragraph{\bf Key generation.}

\begin{enumerate}
\item Pick at random $\gv$ from $\fqm^n$ such that $\norm{\gv} = n$ and
let $\Gm$  be a generator matrix of the Gabidulin code $\gab{k}{\gv}$.

\item Pick at random $\Xm \in \MS{k}{\ell}{\fqm}$, $\Sm$ in $\GL_k(\fqm)$
and $\Pm$ in $\GL_{n+\ell}(\fqm)$ such that
there exist $ \Qm_{11}$ in $\MS{\ell}{\ell}{\fqm}$, $\Qm_{21}$ in $\MS{n}{\ell}{\fqm}$,
 $\Qm_{22}$ in $\MS{n}{n}{\fq}$ and  $\Qm_{12}$ in $\MS{\ell}{n}{\fqm}$
 with $\norm{\Qm_{12}}=s < t $ so that
\begin{equation}\label{First desc of P in FqN}
\Pm^{-1}=\left( \begin{matrix}
\Qm_{11} & \Qm_{12}\\
\Qm_{21} & \Qm_{22}
\end{matrix} \right).
  \end{equation}
\end{enumerate}
The public key is $(\Gp,\tp)$ with $\tp = t - s$ and
\begin{equation}\label{GPTP1P2EF1}
\Gp=\Sm \left( \Xm \mid \Gm \right) \Pm.
\end{equation}

\paragraph{\bf Decryption.}
We have $\cv \Pm^{-1}=\mv \Sm \left(\Xm \mid \Gm \right) + \ev \Pm^{-1}$.
Suppose that $\ev= \left(\ev_1 \mid \ev_2 \right)$ where $\ev_1 \in \fqm^\ell$ and $\ev_2 \in \fqm^n$. 
We have:
\begin{equation}
\ev \Pm^{-1}=
\left(
\ev_1 \Qm_{11} + \ev_2 \Qm_{21} \mid \ev_1 \Qm_{12} + \ev_2 \Qm_{22}\right)
\end{equation}
It is clear that $\norm{\ev_1 \Qm_{12} + \ev_2 \Qm_{22}} \leq
\norm{\ev_1 \Qm_{12}} + \norm{\ev_2 \Qm_{22}} \leq s + t-s$.
So the plaintext $\mv$ is recovered by applying the decoding
algorithm  only  to the last $n$ components of $\cv\Pm^{-1}$.

We state our main result proving that Overbeck's attack is still successful by considering this time
the dual of $\Lambda _{i}\left(\Gp \right)$ with $i = n - s - k - 1$.

\begin{proposition} \label{prop:crypt:rep}
There exist $\Xm^* \in \MS{k}{\ell +s}{\fqm}$, $\Pm^* \in \GL_{n+ \ell}\left(\F_q \right) $  and
 a generator matrix $\Gm^*$ that defines an $(n-s,k)-$Gabidulin code 
 $\gab{k}{\gv^*}$  such that
\begin{equation} \label{eq:1st}
\Gp= \Sm \left(\Xm^* \mid \Gm^* \right)\Pm^*.
\end{equation}
Furthermore, the error correction capability $t^*$ of $\gab{k}{\gv^*}$ is equal to $t - \frac{1}{2}s$, and hence
$t^* > \tp$.
\end{proposition}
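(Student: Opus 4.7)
The approach is to mirror the proof of Proposition~\ref{prop:XtoX2=0}, exploiting the two structural constraints on $\Pm^{-1}$: the low-rank condition $\norm{\Qm_{12}} = s$ and the fact that $\Qm_{22}$ has entries in $\fq$. By Proposition~\ref{prop:rank_reduction} applied to $\Qm_{12}$, there exists $\Tm_2 \in \GL_n(\fq)$ with $\Qm_{12}\Tm_2 = (\Qm^{(0)}_{12} \mid \ZZ)$ for some $\Qm^{(0)}_{12} \in \MS{\ell}{s}{\fqm}$. Split $\Tm_2 = (\Tm'_2 \mid \Tm''_2)$ columnwise with $\Tm'_2 \in \MS{n}{s}{\fq}$ and $\Tm''_2 \in \MS{n}{(n-s)}{\fq}$, so that the key cancellation $\Qm_{12}\Tm''_2 = \ZZ$ holds.

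Next, define the candidate column scrambler through its inverse:
\[
(\Pm^*)^{-1} \eqdef \begin{pmatrix} \Im_\ell & \ZZ \\ \ZZ & \Qm_{22}\Tm_2 \end{pmatrix},
\]
which lies in $\GL_{n+\ell}(\fq)$ under the mild hypothesis that $\Qm_{22}$ is invertible (generically satisfied by a random key). The plan is to evaluate $\Gp(\Pm^*)^{-1} = \Sm(\Xm \mid \Gm)\Pm(\Pm^*)^{-1}$ and show its rightmost $n-s$ columns form a Gabidulin generator matrix. Writing $\Pm$ in blocks $\Pm_{ij}$ matching the partition of $\Pm^{-1}$, the identity $\Pm\Pm^{-1} = \Im$ yields $\Pm_{12}\Qm_{22} = -\Pm_{11}\Qm_{12}$ and $\Pm_{22}\Qm_{22} = \Im_n - \Pm_{21}\Qm_{12}$. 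Right-multiplying by $\Tm''_2$ and invoking $\Qm_{12}\Tm''_2 = \ZZ$ collapses both identities into the single cancellation
\[
\Pm \begin{pmatrix} \ZZ \\ \Qm_{22}\Tm''_2 \end{pmatrix} = \begin{pmatrix} \ZZ \\ \Tm''_2 \end{pmatrix}.
\]

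Consequently the last $n - s$ columns of $\Gp(\Pm^*)^{-1}$ equal $\Sm(\Xm \mid \Gm)\begin{pmatrix}\ZZ\\ \Tm''_2\end{pmatrix} = \Sm\Gm\Tm''_2$, with no contribution from $\Xm$. By Proposition~\ref{ScramblingAGabidulinAtRight}, $\Gm\Tm_2$ is a generator matrix of $\gab{k}{\gv\Tm_2}$, so its last $n - s$ columns $\Gm^* \eqdef \Gm\Tm''_2$ generate the shorter Gabidulin code $\gab{k}{\gv^*}$, where $\gv^*$ consists of the last $n - s$ coordinates of $\gv\Tm_2$; the condition $\norm{\gv^*} = n - s$ holds because $\Tm_2 \in \GL_n(\fq)$ preserves $\fq$-rank and $\norm{\gv} = n$. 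Reading off $\Xm^*$ as the first $\ell + s$ columns of $\Sm^{-1}\Gp(\Pm^*)^{-1}$ then yields the decomposition~\eqref{eq:1st}. The error-correction capability is $t^* = \frac{1}{2}(n - s - k) = t - \frac{s}{2}$, which strictly exceeds $\tp = t - s$ whenever $s \geq 1$.

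The principal difficulty will be organizing the block computation so that the cancellations land cleanly; conceptually, the argument turns on the observation that the rank-$s$ constraint forces $\Qm_{12}$ to vanish on an $\fq$-subspace of codimension $s$, and over that subspace the $\fq$-block $\Qm_{22}$ then supplies the compensating $\fq$-column scrambler $\Pm^*$. This is precisely the loophole that placing $\Pm$ over $\fqm$ (rather than $\fq$) was intended to close, but which the supplementary rank constraint on $\Qm_{12}$ reopens.
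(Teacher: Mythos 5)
Your route is genuinely different from the paper's: instead of factoring $\Pm$ itself into a product of three block matrices (the paper proves a separate Lemma~\ref{lem:decOfP}, working entirely on $\Pm^{-1}$, reducing $\Qm_{12}$ and $\Qm_{22}$ simultaneously by $\fq$-column operations and then inverting the resulting block-triangular matrix via Lemma~\ref{InvBlockMat}), you build the candidate scrambler directly from the visible block $\Qm_{22}$ and extract the cancellation $\Pm\left(\begin{smallmatrix}\ZZ\\ \Qm_{22}\Tm''_2\end{smallmatrix}\right)=\left(\begin{smallmatrix}\ZZ\\ \Tm''_2\end{smallmatrix}\right)$ from the identity $\Pm\Pm^{-1}=\Im$. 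That computation is correct, it immediately identifies the last $n-s$ columns of $\Gp(\Pm^*)^{-1}$ with $\Sm\Gm\Tm''_2$, and $\Gm\Tm''_2$ does generate an $(n-s,k)$-Gabidulin code for the reason you give. This is arguably more transparent than the paper's argument and dispenses with Lemma~\ref{lem:decOfP} altogether.

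There is, however, one genuine gap: you need $(\Pm^*)^{-1}=\left(\begin{smallmatrix}\Im_\ell & \ZZ\\ \ZZ & \Qm_{22}\Tm_2\end{smallmatrix}\right)$ to be invertible, i.e.\ $\Qm_{22}\in\GL_n(\fq)$, and the key generation of the scheme does not guarantee this. Invertibility of $\Pm^{-1}$ only forces the bottom block row $(\Qm_{21}\mid\Qm_{22})$ to have rank $n$, so for $\ell\geq 1$ the matrix $\Qm_{22}$ may well be singular; since the proposition is stated unconditionally, ``generically satisfied by a random key'' does not suffice. Fortunately your own identities close the gap: $\Pm_{22}\left(\Qm_{22}\Tm''_2\right)=\Tm''_2$ shows that $\Qm_{22}\Tm''_2$ always has full column rank $n-s$ over $\fq$, so you can complete the columns of $\left(\begin{smallmatrix}\ZZ\\ \Qm_{22}\Tm''_2\end{smallmatrix}\right)$ to a basis of $\fq^{n+\ell}$, declare these the last $n-s$ columns of $(\Pm^*)^{-1}$, and read off $\Xm^*$ from the remaining $\ell+s$ columns exactly as you already do. The paper never meets this issue because Lemma~\ref{InvBlockMat} hands it invertibility of the relevant $(n-s)\times(n-s)$ corner block for free. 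With that one repair your argument is complete.
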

The proof of this  proposition requires to prove the following lemma.

\begin{lemma} \label{lem:decOfP}
There exist
$\Pm_{11}$ in  $\GL_{\ell+s}(\fqm)$, $\Pm_{21}$ in $\MS{(n-s)}{(\ell+s)}{\fqm}$ and $\Pm_{22}$ in 
$\GL_{n-s}(\fq)$ such that 
\begin{equation} \label{eq:generalPm}
\Pm = 
\left( \begin{matrix}
\Im_\ell & ~\ZZ \\
\ZZ & ~\Lm
\end{matrix}
\right)
\left( \begin{matrix}
\Pm_{11} & ~\ZZ\\
\Pm_{21} & ~\Pm_{22}
\end{matrix} \right)
\left( \begin{matrix}
\Im_\ell & ~\ZZ \\
\ZZ & ~\Rm
\end{matrix}
\right)
\end{equation}
with $\Lm$ and $\Rm$ belonging to $\GL_{n}(\fq)$.
\end{lemma}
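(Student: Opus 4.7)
The plan is to exhibit explicit matrices $\Lm,\Rm\in\GL_n(\fq)$ and then read off $\Pm_{11}, \Pm_{21}, \Pm_{22}$ from the blocks of the resulting product. Setting
\[
\widetilde{\Lm}=\begin{pmatrix}\Im_\ell & \ZZ\\ \ZZ & \Lm\end{pmatrix}, \qquad
\widetilde{\Rm}=\begin{pmatrix}\Im_\ell & \ZZ\\ \ZZ & \Rm\end{pmatrix},
\]
Lemma~\ref{InvBlockMat} shows that the target identity is equivalent to $\widetilde{\Rm}\,\Pm^{-1}\widetilde{\Lm}$ being block-lower-triangular with block sizes $(\ell+s,n-s)$. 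A direct calculation gives
\[
\widetilde{\Rm}\,\Pm^{-1}\widetilde{\Lm}=\begin{pmatrix}\Qm_{11} & \Qm_{12}\Lm \\ \Rm\Qm_{21} & \Rm\Qm_{22}\Lm\end{pmatrix},
\]
so the problem reduces to choosing $\Lm,\Rm$ so that (i) the last $n-s$ columns of $\Qm_{12}\Lm$ vanish, and (ii) the first $s$ rows of $\Rm\Qm_{22}\Lm$ have their last $n-s$ entries equal to zero.

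Condition (i) is handled directly by Proposition~\ref{prop:rank_reduction}: since $\norm{\Qm_{12}}=s<t<n$, there exists $\Lm\in\GL_n(\fq)$ such that $\Qm_{12}\Lm=(\Qm_{12}^*\mid\ZZ)$ with $\Qm_{12}^*\in\MS{\ell}{s}{\fqm}$. Fix this $\Lm$ and let $\Ym\in\MS{n}{(n-s)}{\fq}$ denote the last $n-s$ columns of $\Qm_{22}\Lm$. Condition (ii) then requires the first $s$ rows of $\Rm$ to lie in the $\fq$-left-kernel of $\Ym$, a subspace of $\fq^n$ of $\fq$-dimension at least $n-\rank(\Ym)\geq s$. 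I would choose $s$ linearly independent vectors $r_1,\dots,r_s$ in this kernel and complete them to a basis $r_1,\dots,r_n$ of $\fq^n$; the matrix $\Rm$ whose rows are these $r_i$ then lies in $\GL_n(\fq)$ and fulfils (ii). With these two choices, $\widetilde{\Rm}\,\Pm^{-1}\widetilde{\Lm}$ is block-lower-triangular of block sizes $(\ell+s,n-s)$ and invertible (as a product of invertible matrices), so both diagonal blocks are invertible. Lemma~\ref{InvBlockMat} then yields $\Pm_{11}\in\GL_{\ell+s}(\fqm)$, and $\Pm_{22}\in\GL_{n-s}(\fq)$ (the latter because its inverse is the bottom-right sub-block of $\Rm\Qm_{22}\Lm$, which has all its entries in $\fq$), while $\Pm_{21}\in\MS{(n-s)}{(\ell+s)}{\fqm}$ is simply the bottom-left block.

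I expect the main obstacle to be conceptual rather than computational: the outer factors use the coarse $(\ell,n)$ partition of the $n+\ell$ rows and columns, whereas the middle factor uses the finer $(\ell+s,n-s)$ partition, so $\Lm$ and $\Rm$ have to act on a whole block of size $n$ while achieving cancellation only on its $(n-s)$-subblock. The hypothesis $\norm{\Qm_{12}}=s$ is precisely the ingredient that makes this mismatch reconcilable, and it is what dictates the block sizes appearing in the statement.
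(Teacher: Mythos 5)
Your proof is correct and takes essentially the same route as the paper's: both use Proposition~\ref{prop:rank_reduction} to push the last $n-s$ columns of $\Qm_{12}$ to zero with one $\fq$-invertible factor, then exploit $\rank(\Ym)\le n-s$ (equivalently, that the relevant $n\times(n-s)$ submatrix of $\Qm_{22}$ has row space of codimension at least $s$) to zero out the remaining $s$ rows of the top-right block with the other factor, and finally invert the resulting block-triangular matrix via Lemma~\ref{InvBlockMat}. The only differences are cosmetic (your $\Lm$/$\Rm$ labels are swapped relative to the paper's intermediate computation, and you spell out the kernel-basis construction that the paper leaves implicit).
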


\begin{proof}
By assumption  $\norm{\Qm_{12}}=s < t$ so there exist $\Rm$ in $\GL_{n}(\fq)$ and $\Qm_{12}^{\prime}$ in $\MS{\ell}{s}{\fqm}$ such that  $\Qm_{12}\Rm = \left(\Qm_{12}^{\prime} \mid \ZZ \right)$.
We set  $\Qm_{22}\Rm = \left(\Qm_{22}^{\prime} \mid \Qm_{23}^{\prime} \right)$
where $\Qm^{\prime}_{22}$ in $\MS{n}{s}{\fq}$  and $\Qm^{\prime}_{23}$ in $\MS{n}{n-s}{\fq}$.
Note that we necessarily have
$\norm{\Qm_{23}^\prime} \le n-s$ and therefore there exists  $\Lm \in \GL_{n}(\fq)$ such that $\Lm
\Qm_{23}^{\prime}=\left( \begin{matrix}
\ZZ\\
\Qm_{23}^{\prime \prime}
\end{matrix}
\right)$ with $\Qm_{23}^{\prime \prime} \in \MS{n-s}{n-s}{\F_q}$. Thus
one can rewrite
\begin{eqnarray}
\left( \begin{matrix}
	\Im_\ell & ~\ZZ \\
	\ZZ & ~\Lm
\end{matrix}
\right)
\Pm^{-1}
\left( \begin{matrix}
\Im_\ell & ~\ZZ \\
\ZZ & ~\Rm
\end{matrix}
\right)
&=&
\left( \begin{matrix}
	\Im_\ell & ~\ZZ \\
	\ZZ & ~\Lm
\end{matrix}
\right) 
\left( \begin{matrix}
\Qm_{11} & \Qm_{12}\\
\Qm_{21} & \Qm_{22}
\end{matrix} \right)
\left( \begin{matrix}
\Im_\ell & ~\ZZ \\
\ZZ & ~\Rm
\end{matrix}
\right)
\\
&=&
\left( \begin{matrix}
\Qm_{11} & ~\Qm_{12}^\prime & ~\ZZ \\
\Lm \Qm_{21} & ~\Lm \Qm_{22}^\prime & ~\Lm \Qm_{23}^\prime
\end{matrix} \right) 
\end{eqnarray}
Observe that there exist
$\Qm_{11}^{\prime \prime}$  in $\MS{\ell+ s}{\ell+s}{\fqm}$ and $\Qm_{21}^{\prime \prime}$ in $\MS{n-s}{\ell+s}{\fqm}$ so that we can write 
\[
\left( \begin{matrix}
	\Im_\ell & ~\ZZ \\
	\ZZ & ~\Lm
\end{matrix}
\right)
\Pm^{-1}
\left( \begin{matrix}
\Im_\ell & ~\ZZ \\
\ZZ & ~\Rm
\end{matrix}
\right)
=
\left( \begin{matrix}
\Qm_{11}^{\prime \prime} & \ZZ\\
\Qm_{21}^{\prime \prime} & \Qm_{23}^{\prime \prime}
\end{matrix} \right).
\]
Note that  $\Qm_{23}^{\prime \prime}$  and  $\Qm_{11}^{\prime \prime}$ are necessarily invertible and thanks to 
Lemma~\ref{InvBlockMat} the proof can be terminated. \qed
\end{proof}

\begin{remark} \label{rk:sandl}
The proof of Lemma \ref{lem:decOfP} is still true if it is assumed that $\norm{\Qm_{12}} < s$, and note that 
by construction $s$ is necessarily less than or equal to $\ell$.
\end{remark}
We are now able to give a proof of Proposition~\ref{prop:crypt:rep}.

\begin{proof}[Proposition~\ref{prop:crypt:rep}]
We keep the same notation as those of Lemma \ref{lem:decOfP}.
Let us rewrite $\Gm \Lm$ as  $\left(\Gm_1^\prime \mid \Gm_2^\prime \right)$ where 
$\Gm_1^\prime$ in $\MS{k}{s}{\fqm}$ and $\Gm_2^\prime$ in $\MS{k}{n-s}{\fqm}$
and set now $\Ym= \left(\Xm \mid \Gm_1^\prime \right)$. Observe that $\Gm_2^\prime$ 
generates an $(n-s, k)-$Gabidulin code.
We then have
\[
\left(\Xm \mid \Gm \right)
\left( \begin{matrix}
\Im_\ell & ~\ZZ \\
\ZZ & ~\Lm
\end{matrix}
\right)
\left( \begin{matrix}
\Pm_{11} & ~\ZZ\\
\Pm_{21} & ~\Pm_{22}
\end{matrix} \right)
= 
\left(\Ym \mid \Gm_2^\prime \right)
\left( \begin{matrix}
\Pm_{11} & ~\ZZ\\
\Pm_{21} & ~\Pm_{22}
\end{matrix} \right)
 = 
 \left(\Xm^* \mid \Gm^* \right) 
\]
 where  $\Xm^* = \Ym \Pm_{11} + \Gm_2^\prime \Pm_{21}$ and 
 $\Gm^* = \Gm_2^\prime \Pm_{22}$ 
is a generator matrix of an $(n-s, k)-$Gabidulin code. 
Hence if we set $\Pm^*=\left( \begin{matrix}
\Im_\ell & ~\ZZ \\
\ZZ & ~\Rm
\end{matrix}
\right)$ we then have rewritten $\Gp$ as expected in \eqref{eq:1st}.
Lastly remark that  $\tp = t - s$  and $t^*=\frac{1}{2} (n-s-k) =\frac{1}{2} (n-k)- \frac{1}{2}s > t-s$.
\qed
\end{proof}

\section{Gabidulin, Rashwan and Honary Variant}

In \cite{GRH09,RGH11} Gabidulin, Rashwan and Honary proposed an other variant 
where the column scrambler has its entries defined on the extension field. 
We will prove that their scheme is actually a special 
case of \cite{G08} and because of that, it suffers the same weakness. So, unlike what it is claimed by the authors, 
 Overbeck's attack is still  successful.

\paragraph{\bf Key generation.}

\begin{enumerate}
\item Pick at random $\gv \in \fqm^n$ such that $\norm{\gv} = n$ and
let $\Gm \in \MS{k}{n}{\fqm}$ be a generator matrix of the Gabidulin code $\gab{k}{\gv}$. Let $\tp$ be an integer $< t$
and set $a \eqdef t - \tp$. 

\item Pick at random $\Sm$ in $\GL_k(\fqm)$ and $\Pm \in \GL_{n}(\fqm)$ such that
\begin{equation}
\Pm^{-1}=\left( \Qm_{1} \mid \Qm_{2}\right)
\end{equation}
where $\Qm_{1} \in \MS{n}{a}{\fqm}$ while $\Qm_{2} \in \MS{n}{n - a}{\fq}$  with
$t=\frac{1}{2}(n-k)$ and $\tp  < t$.
The public key is $(\Gp,\tp)$ with  
\begin{equation}
\Gp=\Sm \Gm \Pm.
\end{equation}
\end{enumerate}

\paragraph{\bf Decryption.}
First, we have $\cv \Pm^{-1}= \mv \Sm  \Gm + \ev \Pm^{-1}$
and $\ev \Pm^{-1}=\left(\ev \Qm_{1}\mid \ev \Qm_{2}\right)$.
Observe that $\norm{\ev \Qm_{1}}\leq a$ and $\norm{\ev \Qm_{2}} \leq \norm \ev \leq \tp$, and since $a = t - \tp$ 
we hence have
 \[
 \norm{\ev \Pm^{-1}} 
 \leq 
 \norm{\ev \Qm_{1}} + \norm{\ev \Qm_{2}}  
\leq t.
\]

We now prove that Overbeck's attack is still successful by considering for this scheme
the dual of $\Lambda _i\left(\Gp \right)$ with $i = n - a - k  - 1$. We first introduce
 the matrices $ \Qm_{11} \in \MS{a}{a}{\fqm}$, $\Qm_{21} \in \MS{n-a}{a}{\fqm}$,
 $\Qm_{12} \in \MS{a}{n-a}{\fq}$ and  $\Qm_{22} \in \MS{n-a}{n-a}{\fq}$ such that
 \begin{equation}
\Pm^{-1}=\left( \begin{matrix}
\Qm_{11} & \Qm_{12}\\
\Qm_{21} & \Qm_{22}
\end{matrix} \right).
  \end{equation}
Note that $\norm{\Qm_{12}} \leq a < t$. Furthermore, by looking at the proof  of 
Lemma~\ref{lem:decOfP}, we can see that this lemma and Proposition~\ref{prop:crypt:rep}
are still true even if $\norm{\Qm_{12}} \leq s$. 
Hence, the scheme given in \cite{GRH09,RGH11} is nothing else but a special 
case of \cite{G08} where $\Xm = \ZZ$ and $\Qm_{12}$ has all its entries in the base field $\fq$.
We have therefore the following corollary.

\begin{corollary}\label{attackSGP}
There exist $\Pm^* \in \GL_{n}(\fq)$ and $\Xm \in \MS{k}{a}{\fqm} $ such that 
\begin{equation}
\Gp=\Sm(\Xm \mid \Gm^*) \Pm^*
\end{equation}
where $\Gm^*$ is a generator matrix of an $(n-a,k)-$Gabidulin code whose
error correction capability $t^*$ is equal to $\lfloor \frac{1}{2} ( t + \tp) \rfloor$, and hence $t^*  > \tp$.
\end{corollary}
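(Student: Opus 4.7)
My plan is to recognise the scheme of~\cite{GRH09,RGH11} as a degenerate instance of the scheme treated in Section~\ref{sec:gab_variant} and then transplant, almost verbatim, the argument used to prove Proposition~\ref{prop:crypt:rep}. The crucial structural observation is that $\Qm_{12}\in\MS{a}{n-a}{\fq}$ forces $\norm{\Qm_{12}}\le a$. This replaces the strict rank equality $\norm{\Qm_{12}}=s<t$ that appears in Lemma~\ref{lem:decOfP}, but by Remark~\ref{rk:sandl} the conclusion of that lemma is still available under the weaker hypothesis $\norm{\Qm_{12}}\le s$; the first step is therefore to apply it with $s=a$.

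The second step is to produce the factorisation
\[
\Pm=\Lm\left(\begin{matrix}\Pm_{11}&\ZZ\\\Pm_{21}&\Pm_{22}\end{matrix}\right)\Rm
\]
with $\Lm,\Rm\in\GL_n(\fq)$, $\Pm_{11}\in\GL_a(\fqm)$, $\Pm_{21}\in\MS{n-a}{a}{\fqm}$ and $\Pm_{22}\in\GL_{n-a}(\fq)$. The ``identity block of size $\ell$'' present in Lemma~\ref{lem:decOfP} disappears here because the GRH key has no $\Xm$ part. Next I invoke Proposition~\ref{ScramblingAGabidulinAtRight} twice: first to observe that $\Gm\Lm$ still generates an $(n,k)$-Gabidulin code, which lets me split $\Gm\Lm=(\Gm_1'\mid\Gm_2')$ with $\Gm_1'\in\MS{k}{a}{\fqm}$ and $\Gm_2'\in\MS{k}{n-a}{\fqm}$ so that $\Gm_2'$ generates an $(n-a,k)$-Gabidulin code; and later to confirm that multiplying $\Gm_2'$ by the $\fq$-invertible matrix $\Pm_{22}$ preserves the Gabidulin structure.

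Once these ingredients are in place, substitution into $\Gp=\Sm\Gm\Pm$ and collecting blocks yields an identity of the shape $\Gp=\Sm(\Xm\mid\Gm^*)\Rm$, from which the announced decomposition with $\Pm^*=\Rm$, $\Xm=\Gm_1'\Pm_{11}+\Gm_2'\Pm_{21}$ and $\Gm^*=\Gm_2'\Pm_{22}$ is immediate. The parameter $t^*$ is then computed by plugging $a=t-\tp$ and $t=\frac{1}{2}(n-k)$ into $\lfloor\frac{1}{2}(n-a-k)\rfloor$, giving $\lfloor\frac{1}{2}(t+\tp)\rfloor$.

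I expect the only real obstacle to be the careful justification that Lemma~\ref{lem:decOfP} really does apply when the $\ell$-block of the G08 framework is absent and the rank assumption on $\Qm_{12}$ is weakened to an inequality; once Remark~\ref{rk:sandl} is invoked, the rest of the proof is a direct transcription of the proof of Proposition~\ref{prop:crypt:rep} specialised to $\Xm=\ZZ$.
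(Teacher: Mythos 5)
Your proposal is correct and follows essentially the same route as the paper: the paper likewise observes that the scheme of \cite{GRH09,RGH11} is the special case $\ell=0$, $\Xm=\ZZ$, $s=a$ of the setting of Section~\ref{sec:gab_variant} (with the rank hypothesis on $\Qm_{12}$ weakened to an inequality, as permitted by Remark~\ref{rk:sandl}), and then invokes Proposition~\ref{prop:crypt:rep} directly and computes $t^*=\lfloor\frac{1}{2}(t+\tp)\rfloor$. The only difference is that you unfold the proof of that proposition inline rather than citing it, which changes nothing of substance.
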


\begin{proof}
Apply Proposition~\ref{prop:crypt:rep} with $\ell = 0$ and $s = a$.
Note that the error correction capability $t^*$ of the code $\Gm^*$ is equal to $\frac{1}{2} ( n - a - k)$ that is to say
\[
t^* = t - \frac{1}{2} ( t - \tp) = \frac{1}{2} (t + \tp) > \tp.
\]
\qed
\end{proof}

We summarised in Table~\ref{tab:exp_result} our experimental results obtained with Magma V2.21-6. 
We give the time to find an alternative column scrambler matrix for each parameter 
proposed by the authors in \cite{GRH09} and \cite{RGH11}. In particular, our results outperform those given
in \cite{GRS16,HMR16}.

\begin{table}[h]
\begin{center}
\begin{tabular}{@{~}*{5}{c}@{~}}
\toprule 
$m$ & $k$ & $t$ & $ \tp $ & Time  (second) \\  \midrule
$20$  & $10$ & $5$ & $4$ &  $\leq 1$ \\
$28$ & $14$ & $7$ & $3$  &  $\leq 1$ \\
$28$ & $14$ & $7$ & $4$   & $\leq 1$ \\ 
$28$ & $14$ & $7$ & $5$    & $\leq 1$ \\
$28$ & $14$ & $7$ & $6$  & $\leq 1$ \\
$20$ & $10$ & $5$ & $4$ & $\leq 1$\\
 \bottomrule
\end{tabular}
\end{center}
\caption{Parameters from \cite{GRH09,RGH11} where $n = m$ and at least 80-bit security.} \label{tab:exp_result}
\end{table}

\section{Discussion On a More General Column Scrambler} \label{sec:disc_gen}

In \cite{GRH09} the authors proposed to reinforce the security  by taking a more general column scrambler matrix 
of the form $\Tm \Pm$ where $\Tm$ is  an invertible matrix with its entries in $\fq$ and $\Pm$ is defined over the extension field as it is done in \cite{G08,GRH09,RGH10}. We shall consider Gabidulin's general reparation \cite{G08} since 
\cite{GRH09,RGH10} are particular cases but we emphasize  that this new protection was only defined in \cite{GRH09,RGH10}. 
Assuming that $\Pm$ is then as in \eqref{First desc of P in FqN}, the public key is then of the form
\begin{equation}
\Gp=\Sm \left( \Xm \mid \Gm \right) \Tm  \Pm.
\end{equation}
The decryption of  a ciphertext $\cv$ starts by calculating $\cv \Pm^{-1} \Tm^{-1} =\mv \Sm \left(\Xm \mid \Gm \right) + \ev \Pm^{-1}  \Tm^{-1}$ where $\ev$ is of rank weight $\tp$ and $s = \norm{\Qm_{12}}$. 
The retrieving of the original plaintext  $\mv$ is possible provided 
that $\tp = t -\ell - s$ because  
$\cv \Pm^{-1}\Tm^{-1}=\mv \Sm \left(\Xm \mid \Gm \right) + \ev \Pm^{-1}\Tm^{-1}$.
Suppose that $\ev= \left(\ev_1 \mid \ev_2 \right)$ where $\ev_1 \in \fqm^\ell$ and $\ev_2 \in \fqm^n$,
then we also have
\begin{equation}
\ev \Pm^{-1} \Tm^{-1}=
\left(
\ev_1 \Qm_{11} + \ev_2 \Qm_{21} \mid \ev_1 \Qm_{12} + \ev_2 \Qm_{22}\right)\Tm^{-1}.
\end{equation}
It is clear that $\norm{\ev_1 \Qm_{12} + \ev_2 \Qm_{22}} \leq
\norm{\ev_1 \Qm_{12}} + \norm{\ev_2 \Qm_{22}} \leq s + \tp$ and hence it implies that
\[
\norm{\ev \Pm^{-1} \Tm^{-1}} = \norm{\ev \Pm^{-1}} \leq \norm{ \ev_1 \Qm_{11} + \ev_2 \Qm_{21} } + \norm{\ev_1 \Qm_{12} + \ev_2 \Qm_{22}} \leq \ell + s + \tp.
\]
Therefore the plaintext $\mv$ is recovered by applying the decoding
algorithm  only  to the last $n$ components of $\cv\Pm^{-1}\Tm^{-1}$. But in this case, the rank weight of the last $n$ components of $\ev \Pm^{-1} \Tm^{-1}$ is not  necessarily less than or equal to  $\tp+s$ but rather to $\tp+s+\ell$. Consequently, the decryption  will always succeed if it assumed that $\tp= t - s - \ell$ otherwise the decoding may fail.
Hence, we see why this new reparation was just proposed for the case where $\ell = 0$ \textit{i.e.} 
without any distorsion matrix since otherwise its deteriorates the performances of the original scheme.

We now study more precisely the  security 
this protection might bring in for  the general scheme of \cite{G08}. 
First, rewrite $\Tm$ as
\begin{equation}
\Tm =\left( \begin{matrix}
\Tm_{11} & \Tm_{12}\\
\Tm_{21} & \Tm_{22}
\end{matrix} \right)
  \end{equation}
where  $ \Tm_{11} \in \MS{\ell}{\ell}{\fq}$, $ \Tm_{21} \in \MS{n}{\ell}{\fq}$,
$ \Tm_{12} \in \MS{\ell}{n}{\fq}$
and $ \Tm_{22} \in \MS{n}{n}{\fq}$. 
On the other hand, by Lemma~\ref{lem:decOfP} the matrix $\Pm$ can be expressed as \eqref{eq:generalPm}.
We can find then $\Xm_{1}$ in $\MS{k}{(\ell+s)}{\fqm}$ and $\Xm_{2}$ in 
$\MS{k}{(n - s)}{\fqm}$ such that
\begin{eqnarray*}
 \left( \Xm \mid \Gm \right) \Tm\Pm 
 &=&  \left ( \Xm_1 \mid \Xm_2 + \Gm_1^* \right)
\end{eqnarray*}
where $\Gm_1^*$ generates an $(n-s,k)-$Gabidulin code and $\norm{\Xm_2} = \norm{\Xm} \leq \ell$.
From Proposition~\ref{prop:XtoX2=0} and by taking $t_2 = \ell$, 
there exist
 $\Pm^* \in \GL_{n+\ell}(\fq)$, $\Xm^* \in \MS{k}{(2\ell+s)}{\fqm}$ 
and $\Gm^*$ that generates an $(n - s - \ell)-$Gabidulin code 
such that
\begin{equation} 
 \left( \Xm \mid \Gm \right) \Tm \Pm = \left (\Xm^*  \mid \Gm^* \right) \Pm^*.
\end{equation}
We have therefore proven the following proposition.

\begin{proposition} \label{prop:TP}
Assume that $\Gp=\Sm \left( \Xm \mid \Gm \right) \Tm \Pm$ where  
$\Tm \in \GL_{n+\ell}(\fq)$ and $\Pm$ has the form \eqref{First desc of P in FqN}.
There exist then $\Pm^*$ in $\GL_{n+\ell}(\fq)$, $\Xm^*$ in $\MS{k}{(2\ell + s)}{\fqm} $ and 
a matrix $\Gm^*$ that generates an $(n - s - \ell,k)-$Gabidulin code 
 $\gab{k}{\gv^*}$   such that 
\[
\Gp
=
\Sm (\Xm^* \mid \Gm^*)\Pm^*. 
\]
Furthermore, the correction capability $t^*$ of  $\gab{k}{\gv^*}$ is greater than $t - \frac{1}{2}(\ell +s)$.  
In particular $t^* > \tp$.
\end{proposition}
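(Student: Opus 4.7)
The plan is to cascade two reductions already developed in the paper: first Lemma~\ref{lem:decOfP} (as used in the proof of Proposition~\ref{prop:crypt:rep}) to peel off an $(n-s)$-Gabidulin part from the extension-field column scrambler $\Pm$, and then Proposition~\ref{prop:XtoX2=0} to absorb an extra distortion block into $\Xm^*$, at the cost of further shrinking the Gabidulin length by $\ell$.

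First I would apply Lemma~\ref{lem:decOfP} to rewrite $\Pm$ in the factored form \eqref{eq:generalPm}, that is
\[
\Pm = \begin{pmatrix} \Im_\ell & \ZZ \\ \ZZ & \Lm \end{pmatrix}\begin{pmatrix} \Pm_{11} & \ZZ \\ \Pm_{21} & \Pm_{22} \end{pmatrix}\begin{pmatrix} \Im_\ell & \ZZ \\ \ZZ & \Rm \end{pmatrix}
\]
with $\Lm, \Rm \in \GL_n(\fq)$ and $\Pm_{22} \in \GL_{n-s}(\fq)$. Since both $\Tm$ and the outer left block-diagonal factor have entries in $\fq$, their product is a single matrix in $\GL_{n+\ell}(\fq)$ and can be absorbed on the right of $(\Xm \mid \Gm)$. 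Using Proposition~\ref{ScramblingAGabidulinAtRight}, the matrix $\Gm\Lm$ generates a Gabidulin code of length $n$; splitting it as $(\Gm_0 \mid \Gm_1)$ with $\Gm_1 \in \MS{k}{n-s}{\fqm}$, the last $n$ columns of $(\Xm \mid \Gm)\Tm\operatorname{diag}(\Im_\ell,\Lm)$ are of the shape $\Xm' + \Gm\Lm$ where $\Xm'$ comes from $\Xm$ multiplied by an $\fq$-matrix and so, by Corollary~\ref{cor:uprank}, satisfies $\norm{\Xm'} \leq \norm{\Xm} \leq \ell$. Multiplying next by the middle block and tracking widths yields an intermediate identity
\[
(\Xm \mid \Gm)\Tm \Pm = (\Xm_1 \mid \Xm_2 + \Gm_1^*)\begin{pmatrix} \Im_\ell & \ZZ \\ \ZZ & \Rm \end{pmatrix}
\]
where $\Xm_1 \in \MS{k}{\ell+s}{\fqm}$, $\Gm_1^* = \Gm_1 \Pm_{22}$ generates an $(n-s,k)$-Gabidulin code, and $\Xm_2 \in \MS{k}{n-s}{\fqm}$ still satisfies $\norm{\Xm_2} \leq \ell$.

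Next I would apply Proposition~\ref{prop:XtoX2=0} to $\Sm(\Xm_1 \mid \Gm_1^* + \Xm_2)$ with the parameter $t_2 = \ell$ (so the hypothesis $\norm{\Xm_2} \leq t_2$ is granted). This provides $\Pm^{**} \in \GL_{n+\ell}(\fq)$, a distortion $\Xm^{**} \in \MS{k}{2\ell+s}{\fqm}$, and a generator $\Gm^{**}$ of an $(n-s-\ell,k)$-Gabidulin code $\gab{k}{\gv^*}$ such that $\Sm(\Xm_1 \mid \Gm_1^* + \Xm_2) = \Sm(\Xm^{**} \mid \Gm^{**})\Pm^{**}$. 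Setting $\Xm^* \eqdef \Xm^{**}$, $\Gm^* \eqdef \Gm^{**}$ and $\Pm^* \eqdef \Pm^{**}\operatorname{diag}(\Im_\ell, \Rm) \in \GL_{n+\ell}(\fq)$ gives the announced factorisation $\Gp = \Sm(\Xm^* \mid \Gm^*)\Pm^*$.

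For the error correction bound, the code $\gab{k}{\gv^*}$ has length $n-s-\ell$ and dimension $k$, so $t^* = \lfloor \tfrac{1}{2}(n-s-\ell-k)\rfloor \geq \tfrac{1}{2}(n-k) - \tfrac{1}{2}(\ell+s) = t - \tfrac{1}{2}(\ell+s)$. The decryption analysis preceding the proposition forces $\tp = t - s - \ell$ for correctness, hence $t^* > \tp$. The main obstacle is the accounting of column widths across the two reductions and the verification that the submatrix of $\Gm\Lm$ surviving the decomposition still generates a bona fide Gabidulin code of the claimed length; this is secured by Proposition~\ref{ScramblingAGabidulinAtRight} together with the $\fq$-invertibility of $\Pm_{22}$ and $\Rm$ guaranteed by Lemma~\ref{lem:decOfP}.
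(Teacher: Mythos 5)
Your overall route coincides with the paper's: factor $\Pm$ via Lemma~\ref{lem:decOfP}, extract an $(n-s,k)$-Gabidulin block together with a distortion of column rank at most $\ell$, and then invoke Proposition~\ref{prop:XtoX2=0} with $t_2=\ell$ to absorb that distortion at the cost of shortening the code to length $n-s-\ell$. You are in fact a bit more careful than the paper in retaining the residual factor $\operatorname{diag}(\Im_\ell,\Rm)$ and folding it into $\Pm^*$ at the end, and your width accounting ($\ell+s$ becoming $2\ell+s$) is correct.

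There is, however, a concrete error in the one step you attempt to justify explicitly. Writing $\Tm=\left(\begin{smallmatrix}\Tm_{11}&\Tm_{12}\\ \Tm_{21}&\Tm_{22}\end{smallmatrix}\right)$ with $\Tm_{22}\in\MS{n}{n}{\fq}$, the last $n$ columns of $(\Xm\mid\Gm)\,\Tm\operatorname{diag}(\Im_\ell,\Lm)$ are $\Xm\Tm_{12}\Lm+\Gm\Tm_{22}\Lm$, not ``$\Xm'+\Gm\Lm$'': you have silently replaced $\Tm_{22}$ by the identity. This matters because $\Tm_{22}$ need not be invertible --- invertibility of $\Tm$ only guarantees $\rank\,\Tm_{22}\ge n-\ell$ --- so Proposition~\ref{ScramblingAGabidulinAtRight} cannot be applied to $\Gm\Tm_{22}\Lm$ as you do, and the block surviving after multiplication by the middle factor and by $\Pm_{22}$ is not automatically a generator matrix of an $(n-s,k)$-Gabidulin code. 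A repair must treat the possible rank deficiency $j\le\ell$ of the relevant $\fq$-matrix separately: after a further $\fq$-column operation the $j$ deficient columns join the distortion part, and one obtains a Gabidulin code of length at least $n-s-\ell-j$; this still yields $t^*>\tp$ because here $\tp=t-s-\ell$, but it is not literally the length $n-s-\ell$ you (and the proposition) announce. To be fair, the paper's own argument asserts the intermediate identity $(\Xm\mid\Gm)\Tm\Pm=(\Xm_1\mid\Xm_2+\Gm_1^*)$ without proof and glosses over exactly the same point, so the gap is shared; but the formula you wrote is false as stated, and your appeal to Proposition~\ref{ScramblingAGabidulinAtRight} does not go through without this additional argument.
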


This result shows that actually this new proposed protection does not improve the security even when
applied with the scheme for which a distortion matrix $\Xm$ is used. 
 An example where this protection was used and turns out to be useless is  the scheme 
 given in \cite{GRH09,RGH10}.
 
\section{Conclusion}

The apparition of Overbeck's attack prompted some authors to invent reparations to hide more the structure of the Gabidulin codes. 
One trend advocated the use of a right column scrambler with entries in the extension field as it is done in 
 \cite{G08,GRH09,RGH11}.
Our analysis shows that these reparations aiming at resisting Overbeck's structural attack 
do fail precisely against it. By applying appropriately Overbeck's technique, 
 we were able to construct a Gabidulin code that has the same dimension as the original one but with a lower length. 
 Hence,  we obtain a degraded Gabidulin code in terms 
 of error correction capabilities but we prove  that the degradation does not forbid the error correction of any ciphertext. 
 Furthermore, when the attack is implemented, the  practical results we obtained outperform those given
in \cite{GRS16,HMR16} which were up to our paper 
the best attacks against the schemes of \cite{G08,GRH09,RGH11}.
We also considered in Section~\ref{sec:disc_gen} the case where an isometric transformation is applied  in conjunction  with a right column scrambler 
which has its entries in the extension field. We proved that this protection is useless both in terms of performance and security.

\medskip

The other kind of reparation is followed by the series of works in \cite{L10,RGH10} which propose to resist to Overbeck's attack by taking a distortion matrix 
$\Xm$ so that the codimension of $\Lambda _{n-k-1}\left(\CA \right)$ is equal to $a$ where $a$ is sufficiently large to prevent an exhaustive search. 
But these reparations were cryptanalyzed in \cite{GRS16,HMR15}.

\medskip

Furthermore, since the attack in \cite{HMR15} only considers column scrambler matrices on the base field, one may try to avoid it by combining 
the reparations proposed in  \cite{L10,RGH10} with those of \cite{G08,GRH09,RGH11}. 
Nevertheless, our paper shows that the security of \cite{G08,GRH09,RGH11} can be reduced to the one with a column scrambler with entries in the base field. 
Consequently, using our results and then applying the general attack of \cite{HMR15}
may break this ``patched'' scheme.

\appendix
\newpage

\section{Overbeck's Attack} \label{appendix:overb_attack}

Let assume that $\Gp=\Sm\left( \Xm \mid \Gm\right) \Pm$ is 
the public generator matrixthat generates $\Cp$
with $\Pm \in \GL_{n+\ell}(\fq)$, $\Xm \in \MS{k}{\ell}{\fqm}$ and $\Gm$ generates  a Gabidulin code $\gab{k}{\gv}$ where $\norm{\gv} = n$. Observe that  $\Lambda _{i}(\Gp)$ can be written as 
\begin{eqnarray}
\Lambda _{i}\left(\Gp \right)
& = &
\Sm_{\rm ext}%
\Big ( \begin{matrix}
\Lambda _{i}\left(\Xm\right) \mid \Lambda _{i}\left(\Gm\right)%
\end{matrix}%
\Big )
\Pm
~~\text{ where } ~
\Sm_{\rm ext}\eqdef 
\begin{pmatrix} 
\Sm^{[0]} &               &  \ZZ\\
               & \ddots{}  &      \\
    \ZZ            &               &  \Sm^{[i]}
\end{pmatrix}.
\end{eqnarray}
Since $\Lambda _{i}\left(\Gm\right)$ generates $\gab{k+i}{\gv}= \gab{n-1}{\gv}$, there exists $\Sm^\prime \in \GL_{k(i+1)i}(\fqm)$ such that 
\begin{equation}\label{reduction1}
\Sm^\prime \Lambda _{i}\left(\Gp \right) = 
\left(
\begin{matrix}
 \Xm^* & \Gm_{n-1} \\
\Xm^{**} & \ZZ 
\end{matrix}
\right)
\Pm
\end{equation}
where $\Xm^* \in \MS{(n-1)}{\ell}{\fqm}$, $\Xm^{**} \in \MS{(k(i+1)-n+1)}{\ell}{\fqm}$ and $\Gm_{n-1} \in \MS{(n-1)}{n}{\fqm}$ generates $\gab{n-1}{\gv}$. Using \eqref{reduction1}, one can deduce that  
by taking $ i = n - k - 1$
\[
\dim{\Lambda _{n - k - 1}(\Cp)} = n-1+ \rank (\Xm^{**}).
\]
In the particular case where $\rank (\Xm^{**})= \ell$ then $\dim{\Lambda _{i}(\Cp)}=n+\ell-1$ and 
 thus $\dim{\Lambda _{i}(\Cp)^\perp} = 1$. Furthermore, if 
$\hv$ is a non zero vector from $\dual{\gab{n-1}{\gv}}$ and we set 
$\hv^*=\left(\ZZ \mid \hv \right)\left( \Pm^{-1} \right)^T$
then under the assumption that $\rank (\Xm^{**})= \ell$ we have
\begin{equation} \label{eq:dualOver}
\Lambda_{n - k - 1}(\Cp)^{\perp}= \fqm  \hv^*.
\end{equation}

\begin{proposition}
Let $\vv \in \Lambda_{n-k-1}(\Cp)^{\perp}$ with $\vv \ne \ZZ $. 
Any matrix $\Tm \in \GL_{n+ \ell}(\fq)$ that satisfies $\vv\Tm =\left(\ZZ \mid \hv^{\prime }\right)$ 
with $\hv^\prime \in \fqm^n $ is an alternative column scrambler matrix, that is to say, 
there exist $\Zm$ in $\MS{k}{\ell}{\fqm}$ and $\Gm^*$ that generates  a Gabidulin code $\gab{k}{\gv^*}$ such that
\[
\Gp=\Sm\left( \Zm \mid \Gm^* \right) \Tm.
\]
\end{proposition}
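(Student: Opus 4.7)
The plan is to leverage the one-dimensional characterization $\Lambda_{n-k-1}(\Cp)^{\bot} = \fqm\hv^*$ together with the hypothesis on $\Tm$ to force a block upper-triangular structure on $\Mm := \Pm^{-T}\Tm \in \GL_{n+\ell}(\fq)$, after which the required decomposition follows by elementary block-matrix manipulation and Proposition~\ref{ScramblingAGabidulinAtRight}.

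First, the one-dimensionality of $\Lambda_{n-k-1}(\Cp)^{\bot} = \fqm\hv^*$ provides a unique $\lambda \in \fqm \setminus \{0\}$ with $\vv = \lambda\hv^*$. Substituting $\hv^* = (\ZZ \mid \hv)(\Pm^{-1})^T$ into the equation $\vv\Tm = (\ZZ \mid \hv')$ yields
\[
(\ZZ \mid \hv)\,\Mm = \lambda^{-1}(\ZZ \mid \hv').
\]
Writing $\Mm = \begin{pmatrix} \Mm_{11} & \Mm_{12} \\ \Mm_{21} & \Mm_{22}\end{pmatrix}$ with $\Mm_{11}$ of size $\ell\times\ell$, this reduces to $\hv\Mm_{21} = \ZZ$ and $\hv\Mm_{22} = \lambda^{-1}\hv'$.

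The decisive arithmetic input is $\norm{\hv} = n$: since $\dual{\gab{n-1}{\gv}} = \gab{1}{\gv^{\#}}$ is a one-dimensional Gabidulin code whose generator $\gv^{\#}$ satisfies $\norm{\gv^{\#}} = n$, the coordinates of $\hv$ are $\fq$-linearly independent in $\fqm$. Consequently any $\fq$-vector $\cv \in \fq^n$ with $\hv\cv = 0$ vanishes, and since $\Mm_{21}$ has entries in $\fq$, applying this column-by-column forces $\Mm_{21} = \ZZ$. Hence $\Mm$ is block upper-triangular; Lemma~\ref{InvBlockMat} gives $\Mm_{11}, \Mm_{22} \in \GL(\fq)$, and consequently $\Pm\Tm^{-T} = \Mm^{-T}$ is block lower-triangular of the form $\begin{pmatrix} \Am & \ZZ \\ \Bm & \Cm\end{pmatrix}$ with $\Am \in \GL_\ell(\fq)$ and $\Cm \in \GL_n(\fq)$.

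A direct computation then yields
\[
\Gp\Tm^{-T} = \Sm(\Xm \mid \Gm)\,\Pm\Tm^{-T} = \Sm\bigl(\Xm\Am + \Gm\Bm \,\big|\, \Gm\Cm\bigr),
\]
so setting $\Zm := \Xm\Am + \Gm\Bm$ and $\Gm^* := \Gm\Cm$, Proposition~\ref{ScramblingAGabidulinAtRight} identifies $\Gm^*$ as a generator matrix of the Gabidulin code $\gab{k}{\gv\Cm}$. This produces the sought decomposition (with $\Tm^T$ playing the role of the new column scrambler in the identity $\Gp = \Sm(\Zm \mid \Gm^*)\Tm^T$, which amounts to a harmless transpose convention). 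The main obstacle I anticipate is the full-rank argument forcing $\Mm_{21} = \ZZ$; everything afterwards is routine block-matrix bookkeeping relying only on Lemma~\ref{InvBlockMat} and Proposition~\ref{ScramblingAGabidulinAtRight}.
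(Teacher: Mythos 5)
Your proof is correct and follows essentially the same route as the paper's: identify $\vv$ as a scalar multiple of $\hv^*$, use $\norm{\hv}=n$ (which you, unlike the paper, actually justify via the dual Gabidulin code) to kill the relevant off-diagonal block of the $\fq$-matrix linking $\Pm$ and $\Tm$, invert with Lemma~\ref{InvBlockMat}, and conclude with Proposition~\ref{ScramblingAGabidulinAtRight}. The only divergence is the transpose bookkeeping: you read the hypothesis literally as $\vv\Tm=(\ZZ \mid \hv')$ and land on $\Gp=\Sm(\Zm \mid \Gm^*)\Tm^T$, whereas the paper's proof silently replaces the hypothesis by $\vv\Tm^T=(\ZZ \mid \hv')$ so as to end with $\Tm$ itself; your reading is internally consistent, and the mismatch you flag is a genuine inconsistency between the paper's statement and its proof rather than a gap in your argument.
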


\begin{proof}
From \eqref{eq:dualOver} there exists $\alpha \in \fqm$ such that $\vv= \alpha \hv^* = (\ZZ \mid \alpha \hv)\left( \Pm^{-1} \right)^T$ where $\hv$ is a non zero vector of $\dual{\gab{n-1}{\gv}}$.
Let $\Tm \in \GL_{n+\ell}(\fq)$ such that $\vv\Tm^{T}=(\ZZ \mid \hv^{\prime })$ and
consider the matrices  $\Am \in \MS{\ell}{\ell}{\fq}$  and $\Dm \in \MS{n}{n}{\fq}$ so that
\[
\Tm \Pm^{-1} =
\left(
\begin{matrix}
\Am & \Bm \\
\Cm & \Dm
\end{matrix}
\right).
\]
We have then the following equalities
\begin{equation}\label{reduction2}
\tilde{\hv} \Tm^T = \left(\ZZ \mid \alpha \hv\right)\left(\Pm^{-1}\right)^T \Tm^T =\left(\ZZ \mid \alpha \hv\right)\left(\Tm \Pm^{-1}\right)^T  = (\ZZ \mid \hv^{\prime})
\end{equation}
 It comes out from \eqref{reduction2} that $\hv \Bm^T=\ZZ $ and hence $\Bm = \ZZ$ since $\norm \hv = n$. So we can write $\Tm \Pm^{-1} =
\left(
\begin{matrix}
\Am & \ZZ \\
\Cm & \Dm
\end{matrix}
\right)
$
and using Lemma~\ref{InvBlockMat}, $\Pm \Tm^{-1} =
\left(
\begin{matrix}
\Am^\prime & \ZZ \\
\Cm^\prime & \Dm^\prime
\end{matrix}
\right)
$.
Consequently,
\[
\Gm_{pub}\Tm^{-1}=\Sm\left(\Xm \mid \Gm \right)\left(
\begin{matrix}
\Am^\prime & \ZZ \\
\Cm^\prime & \Dm^\prime
\end{matrix}
\right)
=\Sm\left(\Zm \mid \Gm^* \right)
\]
where $\Gm^* = \Gm \Dm^\prime$ is a generator matrix of an $\left(n,k \right)-$Gabidulin code. So $\Tm$ is an alternative column scrambler matrix for the system. 
\end{proof}

\bibliographystyle{alpha}
\bibliography{codecrypto2}
\end{document}